\theoremstyle{plain}
\newtheorem{Theorem}{Theorem}
\newtheorem{theorem}[Theorem]{Theorem}
\newtheorem{lemma}[Theorem]{Lemma}
\newtheorem{proposition}[Theorem]{Proposition}
\newtheorem{definition}[Theorem]{Definition}
\theoremstyle{definition}
\theoremstyle{remark}
\newtheorem{remark}[Theorem]{Remark}
\numberwithin{Theorem}{section}
\numberwithin{equation}{section}
\newcommand{\md}{{\, \mathrm{d} }}
\newcommand{\pderiv}[2]{\frac{\partial {#1}}{\partial {#2}}}
\DeclareMathOperator{\E}{E}
\newcommand{\1}[1]{1_{\{ #1\}}}
\newcommand*\expandableInput[1]{\@@input#1 }
\newcommand{\Rmnum}[1]{\expandafter\@slowromancap\romannumeral #1@}
\tikzset{
    %Define standard arrow tip
    >=stealth',
    %Define style for small boxes
    punkt/.style={
           rectangle,
           rounded corners,
           draw=black, thick,
           text width=5em,
           minimum height=2em,
           text centered},
    %Define style for large boxes
    punktl/.style={
           re
           tangle,
           rounded corners,
           draw=black, thick,
           
           text width=7em,
           minimum height=2em,
           text centered},
    % Define arrow style
    pil/.style={
           ->,
           shorten <=4pt,
           shorten >=4pt,},
    pildotted/.style={
           ->,
           shorten <=4pt,
           shorten >=4pt,
  dotted,
  }
}
\let\@fnsymbol\@arabic
\title{Tax- and expense-modified risk-minimization for insurance payment processes}
\author[1]{Kristian Buchardt}
\author[1,2,$\star$]{Christian Furrer}
\author[2,3]{Thomas M\o ller}
\affil[1]{\footnotesize PFA Pension, Sundkrogsgade 4, DK-2100 Copenhagen \O, Denmark.}
\affil[2]{\footnotesize Department of Mathematical Sciences, University of Copenhagen, Universitetsparken 5, DK-2100 Copenhagen \O, Denmark.}
\affil[3]{\footnotesize AP Pension, \O stbanegade 135, DK-2100 Copenhagen \O, Denmark.}
\affil[$\star$]{\footnotesize Corresponding author. E-mail: \href{mailto:furrer@math.ku.dk}{furrer@math.ku.dk}.}
\date{}
\begin{document}
\maketitle

\addtocounter{footnote}{4} % Fire fodnoter er brugt i author...

\begin{center}
{\sc Abstract}
\end{center}
{\small

We study the problem of determining risk-minimizing investment strategies for insurance payment processes in the presence of taxes and expenses. We consider the situation where taxes and expenses are paid continuously and symmetrically and introduce the concept of tax- and expense-modified risk-minimization. Risk-minimizing strategies in the presence of taxes and expenses are derived and linked to Galtchouk-Kunita-Watanabe decompositions associated with modified versions of the original payment processes. Furthermore, we show equivalence to an alternative approach involving an artificial market consisting of after-tax and after-expense assets, and we establish -- in a certain sense -- consistency with classic risk-minimization. Finally, a case study involving classic multi-state life insurance payments in combination with a bond market exemplifies the results. 

\vspace{5mm}

\textbf{Keywords:} Quadratic hedging; Incomplete market; Market consistent valuation; Intrinsic value process; Galtchouk-Kunita-Watanabe decomposition

\vspace{5mm}

\textbf{2010 Mathematics Subject Classification:} 62P05; 91G99; 91B30; 60G44

\textbf{JEL Classification:} G10; G22 

}
 
\section{Introduction}\label{sec:intro}
According to a recent OECD report on taxation of funded private pension plans across different countries \cite{oecd2015}, taxes on pension fund returns are widespread. Consequently, market consistent valuation of such insurance liabilities requires one to take into account the associated future tax payments, which are closely related to the investment strategy. Similarly, future expenses associated with the management of the insurance contract and investment strategy should be included in considerations about hedging and valuation. The necessity to take taxes and expenses into account is also reflected in the Solvency II regulation, see \cite{eiopa2009} Article 77--78 and \cite{eiopa2015} Article 28, and the forthcoming IFRS17 regulation, see \cite{ifrs17} Paragraph 34 and Paragraph B65(j).
It is our impression that a unified theory for market consistent valuation in the presence of taxes and expenses is yet to be developed, and accordingly, it is common among practitioners to take taxes and expenses into account via certain ad hoc adjustments of the forward interest rate curve, confer with~\cite{buchardtmoeller2018}.

In this paper, we consider quadratic hedging of insurance payment processes in the presence of taxes and expenses. We allow for idealized taxes and expenses which depend on the investment strategy and develop the concept of tax- and expense-modified risk-minimization. The taxes are defined as a fraction of the returns from the investment strategy, and the expenses are defined as a fraction of the value of the investment strategy, thus both are symmetrical and continuously paid.
The primary idea is to introduce a tax- and expense-modified version of the so-called cost process and then minimize at any time the associated risk process, which is defined as the conditional expected value of the squared future tax- and expense-modified costs given the information currently available.
As our main result, we show the existence and uniqueness of an optimal strategy and relate it to the Galtchouck-Kunita-Watanabe decomposition of the intrinsic value process associated with a tax- and expense-modified payment process.

Given a payment process in an incomplete market, it is well studied how to apply the quadratic hedging criterion of risk-minimization to find an optimal investment strategy and price the contract. The criterion of risk-minimization was originally proposed by~\cite{fs1986} and was extended to insurance payment processes in~\cite{thmoller01}; for an overview, see~\cite{schweizerguided}. Tax- and expense-modified risk-minimization differs from classic risk-minimization, in essence because a tax- and expense-modified savings account is used as numeraire. We show that tax- and expense-modified risk-minimization is consistent with classic risk-minimization in the sense that a subsequent application of classic risk-minimization confirms the investment strategy, thus not reducing the risk further.

In addition to the tax- and expense-modified risk-minimization approach, we also solve the problem by creating an artificial after-tax and after-expense market: The assets are constructed such that the returns are after payment of taxes and expenses. In this market, we are able to apply classic risk-minimization and thereby find an optimal investment strategy, which is essentially identical to the investment strategy from tax- and expense-modified risk-minimization. This is a consequence of the cost processes in the two approaches being in a certain sense identical.

Taxes on investment returns and expenses associated with the management of the insurance contract and investment strategy can be viewed as negative dividends. In that sense, the concept of tax- and expense-modified risk-minimization corresponds to a kind of risk-minimization in the presence of negative dividends. While the extension of risk-minimization to include transaction costs is studied in depth in the literature, see e.g.\ \cite{lambertonetal} and \cite{guasoni2002}, there does not seem to be a similar treatment in the literature of the case of dividends; an exception being \cite{battauz2003} on quadratic hedging in the presence of discrete stochastic dividends.

In \cite{buchardtmoeller2018}, valuation of insurance payment processes in the presence of symmetric and continuously paid taxes and expenses is studied in complete markets. By identifying and explicitly constructing the inherent tax and expense payment processes and adding these to the existing insurance payment process, \cite{buchardtmoeller2018} were able to derive replicating strategies and determine the market value of the combined liability. In particular, by disregarding systematic insurance risk and implicitly dealing with unsystematic insurance risk via diversification, \cite{buchardtmoeller2018} were able to argue that current actuarial practice is prudent. In this paper, we essentially extend the results of~\cite{buchardtmoeller2018} to allow for unsystematic as well as systematic insurance risk via an incomplete market approach.

Having determined the value of the combined liability, as well as the associated risk-minimizing investment strategy, it is interesting, as was done for complete markets in \cite{buchardtmoeller2018}, to study the decomposition into benefit, tax, and expense parts. This is, however, together with extensions of the tax- and expense setup to include asymmetrical and discrete payments, postponed to future research.

The paper is structured in the following way. In Section~\ref{sec:rm}, we give a brief review of the main results on risk-minimization for insurance payment processes. In Section~\ref{sec:tmrm}, we study risk-minimization in the presence of symmetrical and continuously paid taxes and expenses. Section~\ref{sec:ex} concludes with a case study, where we consider classic multi-state life insurance payments in a bond market for a constant tax rate and expenses depending on the current state of the insurance contract(s).

\section{Risk-minimization for insurance payment processes}\label{sec:rm}

In this section, we give a brief review of the main results on risk-minimization for insurance payment processes from~\cite{thmoller01}, see also~\cite{schweizerguided}, before introducing taxes and expenses in the next section. Regarding the technical details and the necessary regularity conditions, we generally refer to \cite{thmoller01} and the references therein.

Consider an arbitrage-free financial market consisting of $d+1$ traded assets with price processes ${S}_0$ and $({S}_1,\ldots, {S}_d)$ defined on a probability space $(\Omega,\mathcal{F},P)$ equipped with a filtration $\mathbb{F}=(\mathcal{F}(t))_{t\in [0,T]}$ satisfying the usual conditions with $\mathcal{F}(0)$ trivial. Here $T>0$ is a fixed finite time horizon.  We assume that ${S}_0$ is the savings account and that it is on the form 
\begin{align*}
S_0(t) =\exp\!\left(\int_0^t r(u) \md u\right)\!,
\end{align*}
where ${r}$ is the so-called short rate process.

All quantities are modeled directly under an equivalent martingale measure $Q$, such that the discounted price processes ${S}^*_j = {S}_j/{S}_0$ are $Q$-martingales. In general, results hold almost surely w.r.t.\ $Q$. We discuss the choice of equivalent martingale measure in the last paragraph of the present section.

We study an undiscounted \textit{insurance payment process}, which is a stochastic process ${A}$ describing the accumulated benefits less premiums associated with some insurance contract(s).

Let ${S}^* = \left( {S}^*_1, \ldots, {S}^*_d\right)$. Following~\cite{schweizer1994, schweizer2008}, there exists a bounded, strictly increasing, predictable process $B$, null at $0$, such that
\begin{align*}
\langle S^*_i, S^*_j \rangle \ll B 
\end{align*}
with $\langle \cdot \rangle$ denoting the predictable variation. Define matrix-valued predictable process $\sigma_S$ by
\begin{align}\label{eq:sigmaB}
\md \langle S^* \rangle = \sigma_S \md B.
\end{align}
Here each $\sigma_S(t)$ is a positive semidefinite symmetric $d \times d$-matrix. To ensure uniqueness of certain decompositions and optimal strategies in the sense that the amount invested in every asset is unique, we further assume that each $\sigma_S(t)$ is actually positive definite.

An \textit{investment strategy} ${h}$ is a $(d+1)$-dimensional process. Both the discounted price processes $S^*_j$, the insurance payment process $A$, the short rate process $r$, and the investment strategies $h$ satisfy certain regularity conditions. The undiscounted \textit{value process} ${V}$ associated with ${h}$ is defined by
 \begin{align}\label{eq:Vhtdef}
	V(h,t) = \sum_{j=0}^d h_j(t) S_j(t).
\end{align}
The value process measures the value of the investment strategy after the payments prescribed by ${A}$, i.e.\ $V(h,t)$ is the value of the investments after the payments $A(t)$ during $[0,t]$. We say that the investment strategy ${h}$ is \textit{$0$-admissible} if the value at time $T$ is $0$, i.e.\ if $V(h,T)=0$.

The undiscounted \textit{cost process} ${C}$ associated with ${h}$ is defined by
\begin{align}
	C(h,t) = V(h,t) - \sum_{j=0}^d \int_0^t h_j(u) \md S_j(u) + A(t). 
\end{align}
The value process measures the current value of the investment strategy ${h}$, and the cost process measures the accumulated costs associated with the investment strategy ${h}$ and the insurance payment process ${A}$. The accumulated costs at time $t$ are given by the current value of the investment portfolio, added past payments and reduced by realized trading gains.

Define the discounted value process by ${V}^* = {V}/{S}_0$, the discounted insurance payment process ${A}^*$ by ${A}^*(0) = 0$ and
\begin{align*}
	\md {A}^*(t) = {S}_0^{-1}(t) \md {A}(t),
\end{align*}
and the discounted cost process ${C}^*$ by ${C}^*(0)=0$ and
\begin{align}\label{eq:dC*htdef}
	\md {C}^*(t)  = {S}_0^{-1}(t) \md {C}(t).
\end{align}
It follows that
\begin{align}\label{eq:dC*ht}
	\md C^*(h,t ) = \md V^*(h,t ) - \sum_{j=1}^d h_j(t) \md S^*_j(t) + \md A^*(t).
\end{align}
The \textit{risk process} ${R}$ associated with ${h}$ and ${A}$ is defined by 
\begin{align}\label{eq:Rt}
	R(h,t) = \E^Q\!\left[\left.  \left(C^*(h,T)- C^*(h,t) \right)^2 \, \right| \mathcal{F}(t) \right]\!.
\end{align}
The process measures the quadratic risk under the measure $Q$ associated with the future costs $(C^*(h,T)-C^*(h,t))$ given the information currently available. An investment strategy $h$ is said to be \textit{risk-minimizing} for ${A}$ if it is $0$-admissible and minimizes the risk process at any point in time.

Following~\cite{fs1986} and \cite{thmoller01}, define the so-called \textit{intrinsic value process} $\mathcal{V}^{A^*}$ associated with $A^*$ by 
\begin{align}\label{eq:VA*T}
	\mathcal{V}^{A^*}(t) = \E^Q\!\left[ \left. A^*(T) \, \right| \mathcal{F}(t)\right]  = A^*(t) + 
	\E^Q\!\left[ \left. \int_t^T e^{-\int_0^s r(u) \md u} \md A(s) \, \right| \mathcal{F}(t)\right]\!.
\end{align}
There exists a unique decomposition for $\mathcal{V}^{A^*}$ on the form
\begin{align}\label{eq:VA*Tdec}
	\mathcal{V}^{A^*}(t) = \mathcal{V}^{A^*}(0) + \sum_{j=1}^d \int_0^t h^{A^*}_j(u) \md S^*_j(u) + L^{A^*}(t),
\end{align}
where $h^{A^*}_1,\ldots, h^{A^*}_d$ satisfy certain regularity conditions, and where $L^{A^*}$ is a zero-mean $Q$-martingale which is orthogonal to the discounted price processes $S^*$. The decomposition~(\ref{eq:VA*Tdec}) is also known as the Galtchouk-Kunita-Watanabe decomposition. 

Theorem 2.1 in~\cite{thmoller01} shows for the case $d=1$ (an extension to the multidimensional case is possible; the assumption of positive definiteness following~\eqref{eq:sigmaB} ensures uniqueness) that there exists a unique risk-minimizing investment strategy ${h}^*$ for ${A}$ given by $h^*_j = h^{A^*}_j$ for $j=1,\ldots, d$ and 
\begin{align}\label{eq:temp?}
	h^*_0(t)  = \mathcal{V}^{A^*}(t) - A^*(t) - \sum_{j=1}^d h_j^{A^*}(t) S^*_j(t).
\end{align}

Consequently, if one can explicitly write up the relevant Galtchouk-Kunita-Watanabe decomposition, this immediately yields an explicit risk-minimizing investment strategy for the insurance payment process.

The value process associated with the risk-minimizing investment strategy is
\begin{align}
	V(h^*,t)
	&=
	\mathcal{V}^{A^*}(t) - A^*(t), \nonumber \\
	&=
	\E^Q\!\left[ \left. \int_t^T e^{-\int_t^s r(u) \md u} \md A(s) \, \right| \mathcal{F}(t)\right]\!.
\end{align}
in particular, the value of the investments before any payments is
\begin{align}
	V(h^*,0-)
	:=
	V(h^*,0) + A(0)
	&=
	\mathcal{V}^{A^*}(0) \nonumber \\
	&=
	A(0)
	+
	\E^Q\!\left[ \int_0^T e^{-\int_0^s r(u) \md u} \md A(s) \right]
\end{align}
due to \eqref{eq:temp?}. 

If the discounted price processes $S^*$ are continuous, then the Föllmer-Schweizer decomposition under $P$ can be obtained as the Galtchouk-Kunita-Watanabe decomposition, computed under the so-called minimal martingale measure $\hat{Q}$, see e.g.~\cite{schweizer2008}. Consequently, for $Q=\hat{Q}$ risk-minimization is equivalent to local risk-minimization (for continuous discounted price processes). This makes the minimal martingale measure a natural candidate measure, but the concept of risk-minimization can in principle be used under any equivalent martingale measure.

\section{Risk-minimization in the presence of taxes and expenses}\label{sec:tmrm}

We extend the setting from Section~\ref{sec:rm} by including symmetrical taxes and expenses paid continuously. As in the previous section, we consider an insurance payment process ${A^{\text{b}}}$ describing the accumulated benefits less premiums associated with some insurance contract(s).  

We study two different approaches. First, we define after-tax and after-expense price processes directly from the underlying before-tax and before-expense price processes. These price processes are constructed exactly such that the return corresponds to the original return after taxes and expenses. In this setting with an artificial after-tax and after-expense market, we apply the criterion of risk-minimization directly . Second, we follow~\cite{buchardtmoeller2018} and construct explicitly the payment processes associated with taxes and expenses and introduce the concept of tax- and expense-modified risk-minimization. The two approaches are conceptually different but are, as we unveil, mathematically equivalent in a specific sense which we explain later. The first approach using after-tax and after-expense price processes is detailed in Subsection~\ref{subsec:after_tax}, while Subsection~\ref{subsec:modified_riskmin} deals with tax- and expense-modified risk-minimization.

Following the second approach, where we have employed tax- and expense-modified risk-minimization, the risk-minimizing investment strategy leads to specific tax payments and expense payments. We investigate the following question: if another investor assumes these payments, would it using classic risk-minimization as in Section~\ref{sec:rm} employ a different optimal investment strategy than the original investor, who used the criterion of tax- and expense-modified risk-minimization? Unsurprisingly, the answer turns out negative; it is not possible to reduce the risk further. The investigation is presented in Subsection~\ref{subsec:two-step}.

%In Subsection~\ref{subsec:sumOut} we summarize the results of the section and discuss similar %approaches such as so-called after-tax expense-modified risk minimization.

\subsection{Risk-minimization in the after-tax and after-expense market} \label{subsec:after_tax}

To model the taxes and expenses, we introduce a tax rate $\gamma$ and an expense rate $\delta$; both are adapted processes. We assume that $\gamma$ takes values in $[0,1)$, has limits from the left, and is bounded away from $1$, while $\delta$ is only assumed to be bounded (and measurable).  The taxes are paid continuously at rate $\gamma$ as a fraction of all returns (positive and negative) from the investment strategy, and the expenses are also paid continuously at rate $\delta$ but instead as a fraction of the value of the investment strategy.

The taxation and expense schemes introduced here are idealizations of real-life regimes. As an example, the Danish PAL-tax is a flat tax of 15.3 \% on pension fund returns, but it is paid on a yearly basis (non-continuously) and asymmetrically: negative yearly returns lead to future tax deductions rather than negative tax payments. By setting $\gamma\equiv0.153$ we obtain an idealization of the Danish taxation regime. In general, we consider our idealized approach a significant step towards understanding and handling a wide range of taxation and expense regimes.

Consider after-tax and after-expense price processes ${\check{S}}_j$ given by $\check{S}_j(0)={S}_j(0)$ and
\begin{align}\label{eq:S_check}
		\md \check{S}_j(t) = \check{S}_j(t-)\left( (1-\gamma(t-)) \frac{\md S_j(t)}{S_j(t-)}  -\delta(t) \md t \right)\!, 
\end{align}
for $j=0,1,\ldots, d$, where ${S}_j$ are the before-tax and before-expense price processes introduced in Section~\ref{sec:rm}. In the following, we assume that the fractions ${\check{S}}_j/{S}_j$ are well-defined and that there exists suitably regular (strong) solutions to \eqref{eq:S_check}. We interpret the after-tax and after-expense price processes as price processes of an artificial after-tax and after-expense market; this is based on the following observation: Rewriting \eqref{eq:S_check}, we see that
\begin{align*}
\frac{\md \check{S}_j(t)}{\check{S}_j(t-)} = (1-\gamma(t-)) \frac{\md S_j(t)}{S_j(t-)} - \delta(t) \md t,
\end{align*}
which shows that the relative returns of the after-tax and after-expense assets are affine transformations of the relative returns of the original before-tax and before-expense assets. The relative returns are scaled with a factor $(1-\gamma)$ and reduced by $\delta$. In other words, the returns~\eqref{eq:S_check} correspond to the returns obtained by an investor paying taxes and expenses according to the scheme described in the beginning of this subsection.

The after-tax and after-expense version of the savings account takes the form
\begin{align}\label{eq:S_0_check}
	\check{S}_0(t) 
	&=
	\exp\!\left(\int_0^t \left(\left(1-\gamma(u)\right)r(u) - \delta(u)\right) \md u\right) \nonumber \\
	&=
	e^{-\int_0^t \left(\gamma(u)r(u) + \delta(u)\right) \md u} S_0(t).
\end{align}
This can be interpreted as using an artificial after-tax and after-expense short rate $((1-\gamma)r-\delta)$ rather than the original short rate $r$. 

We now study the artificial after-tax and after-expense market $({\check{S}}_0,{\check{S}}_1,\ldots,{\check{S}}_d)$ within the setup of Section~\ref{sec:rm}. Thus, we use the after-tax and after-expense savings account ${\check{S}}_0$ as numeraire, and we search for a risk-minimizing investment strategy ${\check{h}}$ for an insurance payment process ${A^{\text{b}}}$ in the after-tax and after-expense market.

It can be shown that the discounted after-tax and after-expense price processes defined by ${\check{S}}_j^*= {\check{S}}_j/{\check{S}}_0$ have dynamics
\begin{align}\label{eq:S*_check}
		\md \check{S}^*_j(t)
		&=
		\check{S}^*_j(t-) \left(1-\gamma(t-)\right)\left( \frac{\md S_j(t)}{S_j(t-)}  -r(t) \md t \right) \nonumber \\
		&=
		\frac{\check{S}^*_j(t-)}{S_j(t-)} S_0(t) \left(1-\gamma(t-)\right) \left(\left(S_0(t)\right)^{-1} \md S_j(t) - r(t) S^*_j(t-) \md t\right) \nonumber \\
		&=
		\frac{\check{S}^*_j(t-)}{S^*_j(t-)} \left(1-\gamma(t-)\right) \!\md S^*_j(t).
\end{align} 
Note that $ {\check{S}}_0$ rather than ${S}_0$ is used as numeraire in the definition of the discounted after-tax  and after-expense price processes.

Since the discounted before-tax and before-expense price processes ${S}^*_j$ are $Q$-martingales, it follows from~\eqref{eq:S*_check} that the after-tax and after-expense price processes ${\check{S}}_j^*$ are $Q$-local martingales. In the following, we assume for simplicity that the after-tax and after-expense price processes actually are $Q$-martingales.

From \eqref{eq:S_0_check} we see that
\begin{align}\label{eq:dS*check}
		\md \check{S}^*_j(t)
		=
		\frac{\check{S}_j(t-)}{S_j(t-)} e^{\int_0^t \left(\gamma(u)r(u) + \delta(u)\right) \md u} \left(1-\gamma(t-)\right)\! \md S^*_j(t),
\end{align} 
which relates the dynamics of the discounted after-tax price and after-expense price processes to the dynamics of the discounted before-tax and before-expense price processes.
%The first term
%\begin{align*}
%\frac{\check{S}_j(t-)}{S_j(t-)} 
%\end{align*}
%gives the relative price difference, while the second term
%\begin{align*}
%e^{\int_0^t \left(\gamma(u)r(u) + \delta(u)\right) \md u}
%=
%\frac{
%S_0(t)
%}
%{
%\check{S}_0(t)
%}
%\end{align*}
%is the relative discounting difference, confer also with \eqref{eq:S_0_check}.
%
%\todo{Fremhæv forskel, forklar $(1-\gamma)$.}

In this setting, the discounted insurance payment process $\check{A}^{\text{b},*}$ is given by $\check{A}^{\text{b},*}(0) = A^{\text{b}}(0)$ and
\begin{align}\label{eq:dAcheck*}
	\md \check{A}^{\text{b},*}(t) 
	=
	\check{S}_0^{-1}(t) \! \md A^{\text{b}}(t)
	=
	e^{\int_0^t \left(\gamma(u)r(u) + \delta(u)\right) \md u} S_0^{-1}(t) \md A^{\text{b}}(t),
\end{align}
the undiscounted and discounted value processes $\check{V}$ and ${\check{V}}^*$ associated with $h$ are
\begin{align*}
	\check{V}({h},t)
	&=
	\sum_{j=0}^{d}{h}(t) \check{S}_j(t), \\
	\check{V}^*({h},t)
	&=
	\sum_{j=0}^{d}{h}(t) \check{S}_j^*(t),
\end{align*}
and the undiscounted and discounted cost processes $\check{C}$ and ${\check{C}}^*$ associated with  $h$ are
\begin{align}
	\check{C}({h},t)
	&=
	\check{V}({h},t) - \sum_{j=1}^{d} \int_0^t {h}_j(u) \md \check{S}_j(u) + {A}^{\text{b}}(t), \\
	\check{C}^*(\check{h},t)
	&=
	\check{V}^*({h},t) - \sum_{j=1}^{d} \int_0^t {h}_j(u) \md \check{S}_j^*(u) + \check{A}^{\text{b},*}(t).
\end{align}
It follows from the results reviewed in Section~\ref{sec:rm} that the risk-minimizing investment strategy for the insurance payment process $A^{\text{b}}$ in the after-tax and after-expense market with numeraire ${\check{S}}_0$ can be expressed in terms of the Galtchouk-Kunita-Watanabe decomposition 
\begin{align}
 	\mathcal{V}^{\check{A}^{\text{b},*}}(t)
 	&=
 	\E^Q\!\left[ \left. \check{A}^{\text{b},*}(T) \, \right| \mathcal{F}(t)\right] \nonumber \\
 	&=
 	\mathcal{V}^{\check{A}^{\text{b},*}}(0)
 	+
 	\sum_{j=1}^d \int_0^t \check{h}^{\check{A}^{\text{b},*}}_j(u) \md \check{S}^*_j(u) + \check{L}^{\check{A}^{\text{b},*}}(t), \label{eq:GKW_Scheck}
\end{align}
where the zero-mean martingale $\check{L}^{\check{A}^{\text{b},*}}$ is orthogonal to the discounted after-tax and after-expense price processes $({\check{S}}^*_1,\ldots,{\check{S}}^*_d)$. 

With this notation in place, we can write up the following result, which is an immediate consequence of the results reviewed in Section~\ref{sec:rm}.

\begin{proposition}\label{prop:after-tax}
The unique risk-minimizing investment strategy ${\check{h}}^*$ in the after-tax and after-expense market is given by
\begin{align}\label{h_check_j}
	\check{h}^*_j(t) = \check{h}^{\check{A}^{\text{\em b},*}}_j(t)
\end{align}
for $j=1,\ldots, d$ and 
\begin{align}\label{h_check_0}
	\check{h}^*_0(t)  = \mathcal{V}^{\check{A}^{\text{\em b},*}}(t) - \check{A}^{\text{\em b},*}(t) - \sum_{j=1}^d \check{h}^*_j(t) \check{S}^*_j(t).
\end{align}
The associated value process is
\begin{align}\label{V_check_value}
	\check{V}(\check{h}^*,t)
	=
	\E^Q\!\left[ \left. \int_t^T e^{-\int_t^s \left(\left(1-\gamma(u)\right)r(u) - \delta(u)\right) \md u} \md A^{\text{\em b}}(s) \, \right| \mathcal{F}(t)\right]\!.
\end{align}
\end{proposition}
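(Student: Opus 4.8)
The plan is to observe that the artificial after-tax and after-expense market $(\check S_0,\check S_1,\ldots,\check S_d)$, used together with $\check S_0$ as numeraire, is itself an instance of the abstract market of Section~\ref{sec:rm}, so that the proposition follows by transcribing the results recalled there with $A^{\text{b}}$ in the role of $A$. First I would check that the standing hypotheses of Section~\ref{sec:rm} are met in this market: by~\eqref{eq:S_0_check} we have $\check S_0(t)=\exp(\int_0^t((1-\gamma(u))r(u)-\delta(u))\md u)$ with $\check S_0(0)=1$, so $\check S_0$ has exactly the savings-account form with ``short rate'' $(1-\gamma)r-\delta$, which is adapted and integrable by the boundedness assumptions on $\gamma$, $r$ and $\delta$; the discounted prices $\check S^*_j$ are $Q$-martingales by the assumption made after~\eqref{eq:S*_check}; and $\check A^{\text{b},*}$ together with the admissible strategies inherits the regularity required in Section~\ref{sec:rm} from $A^{\text{b}}$ and from the assumed regularity of the solutions to~\eqref{eq:S_check}.

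The one hypothesis calling for a genuine, if short, argument is the positive-definiteness condition following~\eqref{eq:sigmaB}, since this is exactly what secures uniqueness of the optimal strategy. By~\eqref{eq:S*_check} each $\check S^*_j$ arises from $S^*_j$ by integrating the predictable, strictly positive integrand $\phi_j(t):=(1-\gamma(t-))\,\check S^*_j(t-)/S^*_j(t-)$; writing $D(t)$ for the diagonal matrix with entries $\phi_1(t),\ldots,\phi_d(t)$, this gives $\md\langle\check S^*\rangle=D\sigma_S D\,\md B$, so one may take $\check B=B$ and $\sigma_{\check S}=D\sigma_S D$. Since $\gamma$ is bounded away from $1$ and the ratios $\check S^*_j/S^*_j$ are strictly positive, $D(t)$ is invertible, whence $\sigma_{\check S}(t)$ is positive definite whenever $\sigma_S(t)$ is. With the framework of Section~\ref{sec:rm} thus in force, the multidimensional extension of Theorem~2.1 of~\cite{thmoller01} yields the existence and uniqueness of $\check h^*$ together with~\eqref{h_check_j}--\eqref{h_check_0}, expressed through the Galtchouk-Kunita-Watanabe decomposition~\eqref{eq:GKW_Scheck}.

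Finally I would read off the value process. The discounted value-process identity from Section~\ref{sec:rm}, applied in the after-tax and after-expense market, reads $\check V^*(\check h^*,t)=\mathcal V^{\check A^{\text{b},*}}(t)-\check A^{\text{b},*}(t)=\E^Q[\int_t^T\md\check A^{\text{b},*}(s)\mid\mathcal F(t)]$; multiplying by $\check S_0(t)$, which is $\mathcal F(t)$-measurable, and substituting $\md\check A^{\text{b},*}(s)=\check S_0^{-1}(s)\md A^{\text{b}}(s)$ from~\eqref{eq:dAcheck*} gives $\check V(\check h^*,t)=\E^Q[\int_t^T(\check S_0(t)/\check S_0(s))\md A^{\text{b}}(s)\mid\mathcal F(t)]$, and $\check S_0(t)/\check S_0(s)=\exp(-\int_t^s((1-\gamma(u))r(u)-\delta(u))\md u)$ turns this into~\eqref{V_check_value}. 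I expect no deep difficulty here: the main obstacle is really bookkeeping — making sure that none of the regularity and non-degeneracy requirements of the abstract setup is lost when the numeraire and the traded assets are replaced by their after-tax and after-expense counterparts — and the positive-definiteness check above is the one place where this has to be verified rather than merely asserted.
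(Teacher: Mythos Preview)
Your proposal is correct and takes essentially the same approach as the paper, which simply states that the proposition ``is an immediate consequence of the results reviewed in Section~\ref{sec:rm}'' without further detail. In fact you supply more than the paper does: the positive-definiteness check for $\sigma_{\check S}=D\sigma_S D$ and the explicit derivation of~\eqref{V_check_value} are left implicit there.
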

From~\eqref{V_check_value} we see that the current value of the investment strategy can be obtained as the conditional expected value of future payments, discounted with an artificial after-tax and after-expense short rate $((1-\gamma)r-\delta)$ rather than the original short rate $r$.

Recall that our interest lies in the before-tax and before-expense market rather than the artificial after-tax and after-expense market. We shall therefore now restate the risk-minimizing investment strategy in terms of quantities pertaining to the before-tax and before-expense market.

An alternative Galtchouk-Kunita-Watanabe decomposition of $\mathcal{V}^{\check{A}^{\text{b},*}}$ is obtained by ta-\linebreak king the discounted before-tax and before-expense price processes $({S}^*_1,\ldots,{S}^*_d)$ as integrators, which yields
\begin{align}\label{eq:gkw_altern}
 	\mathcal{V}^{\check{A}^{\text{b},*}}(t)
 	&=
 	\mathcal{V}^{\check{A}^{\text{b},*}}(0)
 	+
 	\sum_{j=1}^d \int_0^t {h}^{\check{A}^{\text{b},*}}_j(u) \md S^*_j(u) + {L}^{\check{A}^{\text{b},*}}(t), 
\end{align}
where the zero-mean martingale ${L}^{\check{A}^{*,\text{b}}}$ is orthogonal to the discounted before-tax and before-expense price processes $({S}^*_1,\ldots,{S}^*_d)$. It moreover follows from~\eqref{eq:dS*check} that
\begin{align}\label{eq:dS*check2}
		\md {S}^*_j(t)
		=
		\frac{{S}_j(t-)}{\check{S}_j(t-)} e^{-\int_0^t \left(\gamma(u)r(u) + \delta(u)\right) \md u} \frac{1}{1-\gamma(t-)}\! \md \check{S}^*_j(t).
\end{align} 
Because ${L}^{\check{A}^{*,\text{b}}}$ is orthogonal to $({S}^*_1,\ldots,{S}^*_d)$, we thus find that it is also orthorgonal to $(\check{S}^*_1,\ldots,\check{S}^*_d)$. By~\eqref{eq:dS*check2}, we also find that
\begin{align}
 	\mathcal{V}^{\check{A}^{\text{b},*}}(t)
 	=& \,
 	\mathcal{V}^{\check{A}^{\text{b},*}}(0)
 	+
 	{L}^{\check{A}^{\text{b},*}}(t) \nonumber \\
 	+& \,
 	\sum_{j=1}^d \int_0^t {h}^{\check{A}^{\text{b},*}}_j(u)
 	\frac{S_j(u-)}{\check{S}_j(u-)} e^{-\int_0^u \left(\gamma(\tau)r(\tau) + \delta(\tau)\right) \md \tau} \frac{1}{1-\gamma(u-)} \md \check{S}^*_j(u). \label{eq:gkw_altern_check}
\end{align}
Hence~\eqref{eq:gkw_altern_check} is another Galtchouk-Kunita-Watanabe decomposition of $\mathcal{V}^{\check{A}^{\text{b},*}}$ w.r.t.\ \linebreak  $(\check{S}^*_1,\ldots,\check{S}^*_d)$.  Uniqueness of the decomposition then implies $\check{L}^{\check{A}^{*,\text{b}}}={L}^{\check{A}^{*,\text{b}}}$ and
\begin{align}\label{eq:gkw_relation}
\frac{\check{S}_j(t-)}{S_j(t-)}
\check{h}^{\check{A}^{\text{b},*}}_j(t)
=
\frac{1}{1-\gamma(t-)} e^{-\int_0^t \left(\gamma(u)r(u) + \delta(u)\right) \md u} 
{h}^{\check{A}^{\text{b},*}}_j(t)
\end{align}
for $j=1,\ldots,d$.

While the risk-minimizing investment strategy of Proposition~\ref{prop:after-tax} pertains to the after-tax and after-expense market, it can be restated in terms of the before-tax and before-expense assets. Assume for a moment that the price processes are continuous, thus $S_j(t-)=S_j(t)$. An investment at time $t$ of $\check{h}^*_j(t)$ in the after-tax and after-expense asset $j$ corresponds to an investment of
\begin{align}
	h^*_j(t) = \frac{\check{S}_j(t)}{S_j(t)}  \check{h}^*_j(t)
\end{align}
in the before-tax and before-expense asset $j$. It follows from~\eqref{eq:gkw_relation} and~\eqref{h_check_j} that
\begin{align}
	h^*_j(t)
	=
	\frac{1}{1-\gamma(t-)} e^{-\int_0^t \left(\gamma(u)r(u) + \delta(u)\right) \md u}
	{h}^{\check{A}^{\text{b},*}}_j(t), \label{eq:relation1}
\end{align}
for $j=1,\ldots,d$, and from \eqref{h_check_0} and \eqref{eq:S_0_check} that
\begin{align}
	h^*_0(t)
	&=
	e^{-\int_0^t \left(\gamma(u)r(u) + \delta(u)\right) \md u}
	\left(
	\mathcal{V}^{\check{A}^{\text{b},*}}(t) - \check{A}^{\text{b},*}(t)
	\right)
	 - \frac{\check{S}_0(t)}{S_0(t)} \sum_{j=1}^d \check{h}^*_j(t) \check{S}^*_j(t) \nonumber \\
	&=
	 e^{-\int_0^t \left(\gamma(u)r(u) + \delta(u)\right) \md u}
	\left(
	\mathcal{V}^{\check{A}^{\text{b},*}}(t) - \check{A}^{\text{b},*}(t)
	\right)
	 - \sum_{j=1}^d h^*_j(t) S^*_j(t). \label{eq:relation2}
\end{align}
Even if the price processes are not continuous, we can consider the investment strategy ${h}^*$ given by
\begin{align}\label{eq:h_star_j}
	h^*_j(t) = \frac{\check{S}_j(t-)}{S_j(t-)}  \check{h}^*_j(t)
\end{align}
for $j=1,\ldots,d$ and
\begin{align}\label{eq:h_star_0}
	h^*_0(t) = \frac{\check{S}_0(t)}{S_0(t)}  \check{h}^*_0(t) + \left(S_0(t)\right)^{-1} \sum_{j=1}^d \check{h}^*_j(t) \check{S}_j(t)
	\left(
	1 - \frac{S_j(t)}{S_j(t-)} \frac{\check{S}_j(t-)}{\check{S}_j(t)}
	\right)\!,
\end{align}
where the investment in the before-tax and before-expense savings account exactly has been determined such that the total value remains unchanged, i.e.\ such that
\begin{align}
V(h^*,t)
=
\check{V}(\check{h}^*,t).
\end{align}
In particular, ${h}^*$ is $0$-admissible (pertaining to the before-tax and before-expense market). Furthermore, straightforward calculations show that ${h}^*$ also in the discontinuous case satisfies \eqref{eq:relation1} and \eqref{eq:relation2}. In the following subsection, we arrive at the same investment strategy by explicitly constructing the payment processes associated with taxes and expenses and applying the new concept of tax- and expense-modified risk-minimization rather than classic risk-minimization. 

\subsection{Tax- and expense-modified risk-minimization} \label{subsec:modified_riskmin}

In this subsection, we consider an alternative approach and construct explicitly payment processes related to taxes and expenses. Let $\gamma$ and $\delta$ be the tax and expense rates introduced in Subsection~\ref{subsec:after_tax}. Since the taxes and expenses lead to payments that depend on the investment strategy and the investment returns, we introduce two additional payment processes ${A}^{\text{tax}}({h})$ and  ${A}^{\text{e}}({h})$ for taxes and expenses, respectively, defined by ${A}^{\text{tax}}({h},0)=0$, ${A}^{\text{e}}({h},0)=0$, and
\begin{align}
	\md A^{\text{tax}}(h,t) &= \gamma(t-)  \sum_{j=0}^d h_j(t) \md S_j(t), \label{eq:dAtaxht}\\
	\md A^{\text{e}}(h,t) &= \delta(t) V(h,t) \md t. \label{eq:dAexpht}
\end{align}
We note that the taxes are symmetric in the sense that positive investment returns lead to a tax payment, whereas negative investment returns lead to a tax income (negative payment). We can interpret the taxes and expenses as negative dividends, by introducing dividend processes $D_j$ given by $D_j(0)=0$ and
\begin{align*}
\md D_j(t) = -\gamma(t-) \md S_j(t) - \delta(t) S_j(t) \md t
\end{align*}
for $j=0,\ldots,d$, when
\begin{align*}
\md A^{\text{tax}}(h,t)  + \md A^{\text{e}}(h,t) = - \sum_{j=0}^d h_j(t) \md D_j(t).
\end{align*}
The traded assets with price processes $(S_0,S_1,\ldots,S_d)$ are then to be seen as trading ex dividend (or before taxes and expenses). In comparison, the artificial market with price processes $(\check{S}_0,\check{S}_1,\ldots,\check{S}_d)$ of Subsection~\ref{subsec:after_tax} was given the interpretation of being after taxes and expenses. As mentioned in Section~\ref{sec:intro}, risk-minimization in the presence of dividends appears to be a rather unexplored area of research; for a general introduction to dividends in continuous time we refer to \cite{bjork2009} Chapter 16.

We are interested in the problem of determining risk-minimizing investment strategies for the combined payments consisting of the three payment processes ${A}^{\text{b}}$, ${A}^{\text{tax}}({h})$, and ${A}^{\text{e}}({h})$. Thus, we define the undiscounted cost process in the presence of taxes and expenses as the cost process of the total payments, which depend on the choice of our object of interest, namely the investment strategy $h$. In this sense, we are facing a fixed-point problem.
\begin{definition}The undiscounted cost process ${C}$ in the presence of taxes and expenses associated with an investment strategy ${h}$ and an insurance payment process ${A}^{\text{\em b}}$ is defined by
\begin{align}\label{eq:Cwithtax}
	C(h,t) = V(h,t) - \sum_{j=0}^d \int_0^t h_j(u) \md S_j(u) + A^{\text{\em b}}(t) + A^{\text{\em tax}}(h,t) + A^{\text{\em e}}(h,t),
\end{align}
where  ${A}^{\text{\em tax}}({h})$ and ${A}^{\text{\em e}}({h})$ are defined by~(\ref{eq:dAtaxht}) and~(\ref{eq:dAexpht}), respectively.
\end{definition}
We see that $C(h,t)$ comprises the accumulated costs during $[0,t]$ including the payments ${A}^{\text{b}}(t)$, ${A}^{\text{tax}}({h},t)$, and ${A}^{\text{e}}({h},t)$. Therefore, the value process at time $t$, i.e.\ $V(h,t)$, should, in a similar fashion to previously, be interpreted as the value of the portfolio $h$ held at time $t$ \textit{after} all payments during $[0,t]$, \textit{including taxes and expenses}.

The discounted cost process $C^*$ is defined from the undiscounted cost process via~\eqref{eq:dC*htdef}, i.e.\
\begin{align}\label{eq:defC*}
	C^*(h,t) = C(h,0) +  \int_0^t S_0^{-1}(u) \md C(h,u).
\end{align}

We now introduce the following definitions of tax- and expense-modified value, cost, and risk processes, respectively.

\begin{definition}\label{def:modifiedValueAndCost}The tax- and expense-modified value and cost processes are defined via 
\begin{align}
	\tilde{V}(h,t) &= e^{\int_0^t \left(\gamma(u) r(u) + \delta(u)\right) \md u} {V}^*(h,t), \label{eq:Vtildeht}\\
	\tilde{C}(h,t) &= C^*(h,0) + \int_0^t  e^{\int_0^u  \left(\gamma(\tau) r(\tau) + \delta(\tau)\right) \md \tau} \md C^*(h,u), \label{eq:Ctildeht}
\end{align}
where ${C}^*$ is defined by~\eqref{eq:defC*}, and where ${C}$ is defined by~\eqref{eq:Cwithtax}. A strategy is said to be risk-minimizing for ${A}^{\text{\em b}}$ in the presence of taxes and expenses if it is $0$-admissible and minimizes for all $t\in[0,T]$ the tax- and expense-modified risk process ${\tilde{R}}$ defined by 
\begin{align}\label{eq:Rtildeht}
	\tilde{R}(h,t) = \E^Q\!\left[\left.  \left(\tilde{C}(h,T)- \tilde{C}(h,t) \right)^2 \right| \mathcal{F}(t) \right]\!.
\end{align}
\end{definition}
Note that the tax- and expense-modified quantities correspond to the usual discounted quantities but using as numeraire the after-tax and after-expense savings account rather than the before-tax and before-expense savings account.

Using methods similar to the ones applied in~\cite{buchardtmoeller2018}, we observe that
\begin{align*}
	e^{\int_0^t  \left(\gamma(u) r(u) + \delta(u)\right) \md u}&  \md C^*(h,t) \\  =& \md \!\left(  e^{\int_0^t  \left(\gamma(u) r(u) + \delta(u)\right) \md u} V^*(h,t)\right)   + 
	e^{\int_0^t (\gamma(u) r(u) + \delta(u)) \md u} \md A^{\text{b},*}(t) \\
	 -& \sum_{j=1}^d h_j(t) \left(1-\gamma(t-)\right) e^{\int_0^t  \left(\gamma(u) r(u) + \delta(u)\right) \md u} \md S^*_j(t).   
\end{align*}
It follows from the definition given by~\eqref{eq:Ctildeht} and the above calculation that 
\begin{align}
	\md \tilde{C}(h,t) &= \md \tilde{V}(h,t) - \sum_{j=1}^d h_j(t) \left(1-\gamma(t-)\right) e^{\int_0^t  \left(\gamma(u) r(u) + \delta(u)\right) \md u} \md S^*_j(t)  \nonumber \\
	 & + e^{\int_0^t  \left(\gamma(u) r(u) + \delta(u)\right) \md u} \md A^{\text{b},*}(t). \label{eq:dCtildeht}
\end{align}
Thus, the dynamics of the tax- and expense-modified cost process ${\tilde{C}}$ has a structure which is similar to the dynamics for the discounted cost process in the traditional setting, compare with~\eqref{eq:dC*ht}. This observation enables us to use similar techniques as in the classic setting without taxes and expenses for determining risk-minimizing investment strategies. To do this, we first define a tax- and expense-modified version ${\tilde{A}}^{\text{b}}$ of the discounted insurance payment process ${A}^{\text{b},*}$ via 
\begin{align*}
	\tilde{A}^{\text{b}}(t) &= A^{\text{b},*}(0) + \int_0^t  e^{\int_0^s  \left(\gamma(u) r(u) + \delta(u)\right) \md u} \md A^{\text{b},*}(s).
\end{align*}
Note that the tax- and expense-modified insurance payment process corresponds to the usual discounted insurance payment process but with the after-tax and after-expense savings account rather than the before-tax and before-expense savings account as numeraire, i.e.
\begin{align}\label{eq:tilde_check_ident}
\tilde{A}^{\text{b}}
=
\check{A}^{\text{b},*},
\end{align}
where ${\check{A}}^{\text{b},*}$ is given by~\eqref{eq:dAcheck*}. The new notation is solely to stress the change in interpretation compared to the Subsection~\ref{subsec:after_tax}.

We again define the intrinsic value process associated with $\tilde{A}^{\text{b}}$ as the $Q$-martingale 
\begin{align}\label{eq:VAtildeT}
	\mathcal{V}^{\tilde{A}^{\text{b}}}(t) 
	&= \E^Q\!\left[ \left. \tilde{A}^{\text{b}}(T) \right| \mathcal{F}(t)\right]  \\
	&= \tilde{A}^{\text{b}}(t) + 
	\E^Q\!\left[ \left. \int_t^T e^{-\int_0^s r(u) \md u} e^{\int_0^s  \left(\gamma(u) r(u) + \delta(u)\right) \md u} \md A^{\text{b}}(s) \, \right| \mathcal{F}(t)\right]\!, \label{eq:VAtildeT2}
\end{align}
and (re)write its Galtchouk-Kunita-Watanabe decomposition as
 \begin{align}\label{eq:VAtildeTdec}
	\mathcal{V}^{\tilde{A}^{\text{b}}}(t) = \mathcal{V}^{\tilde{A}^{\text{b}}}(0) + \sum_{j=1}^d \int_0^t h^{\tilde{A}^{\text{b}}}_j(u) \md S^*_j(u) + L^{\tilde{A}^{\text{b}}}(t).
\end{align}
Due to uniqueness of the decomposition, we have $\mathcal{V}^{\tilde{A}^{\text{b}}}=\mathcal{V}^{\check{A}^{\text{b},*}}$, $L^{\tilde{A}^{\text{b}}}=L^{\check{A}^{\text{b},*}}$, and $h^{\tilde{A}^{\text{b}}}_j = h^{\check{A}^{\text{b},*}}_j$ for $j=1,\ldots,d$, confer also with~\eqref{eq:tilde_check_ident} and~\eqref{eq:gkw_altern}.

The following theorem contains the main result of the paper.

\begin{theorem}\label{te:rmtax}There exists a unique risk-minimizing investment strategy ${\tilde{h}}$ for ${A}^{\textit{\em b}}$ in the presence of taxes and expenses given by
\begin{align}\label{eq:htilde_j_opt}
	\tilde{h}_j(t) = \frac{1}{1-\gamma(t-)}e^{-\int_0^t \left(\gamma(u) r(u) + \delta(u)\right) \md u} h^{\tilde{A}^{\text{\em b}}}_j(t),
\end{align}
for $j=1,\ldots, d$ and 
\begin{align}\label{eq:htilde_0_opt}
	\tilde{h}_0(t)  =  e^{-\int_0^t \left(\gamma(u) r(u) + \delta(u)\right) \md u}\left(\mathcal{V}^{\tilde{A}^{\text{\em b}}}(t) - \tilde{A}^{\text{\em b}}(t)\right)  - \sum_{j=1}^d \tilde{h}_j(t) S^*_j(t).
\end{align}
The associated risk process is given by
\begin{align}\label{eq:Rtildeht_optimal}
	\tilde{R}(\tilde{h},t) = \E^Q\!\left[ \left. {\left( L^{\tilde{A}^{\text{\em b}}}\!(T)-L^{\tilde{A}^{\text{\em b}}}\!(t)  \right)}^2 \, \right| \mathcal{F}(t)\right]\!,
\end{align}
and the associated value process is
\begin{align}\label{V_value}
	V(\tilde{h},t)
	=
	\E^Q\!\left[ \left. \int_t^T e^{-\int_t^s \left(\left(1-\gamma(u)\right)r(u) - \delta(u)\right) \md u} \md A^{\text{\em b}}(s) \, \right| \mathcal{F}(t)\right]\!.
\end{align}
\end{theorem}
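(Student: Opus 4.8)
The plan is to mirror the classic proof of Theorem 2.1 in \cite{thmoller01}, but working throughout with the tax- and expense-modified quantities, and to exploit the structural identity~\eqref{eq:dCtildeht} which says that $\tilde C$ has exactly the same form as the classic discounted cost process~\eqref{eq:dC*ht}, only with $\md S^*_j$ replaced by the modified integrators $(1-\gamma(t-))e^{\int_0^t(\gamma r+\delta)}\md S^*_j(t)$ and $\md A^{\text{b},*}$ replaced by $\md\tilde A^{\text{b}}$. First I would observe that, by Definition~\ref{def:modifiedValueAndCost}, a strategy $h$ is $0$-admissible and risk-minimizing for $A^{\text{b}}$ in the presence of taxes and expenses if and only if it minimizes $\tilde R(h,t)$ for all $t$, and that $\tilde C(h,T)=\tilde V(h,T)+\tilde A^{\text{b}}(T)+(\text{stochastic integral terms over }[0,T])$; since $0$-admissibility forces $V(h,T)=0$, hence $\tilde V(h,T)=0$, the terminal modified cost is pinned down. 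The key reduction is then: define $\tilde h$ by~\eqref{eq:htilde_j_opt}--\eqref{eq:htilde_0_opt} and check that the candidate has $\md\tilde C(\tilde h,t)=\md L^{\tilde A^{\text{b}}}(t)$, i.e.\ that plugging~\eqref{eq:htilde_j_opt} into the modified-integrator term of~\eqref{eq:dCtildeht} turns $\sum_j h_j(t)(1-\gamma(t-))e^{\int_0^t(\gamma r+\delta)}\md S^*_j(t)$ into exactly $\sum_j h^{\tilde A^{\text{b}}}_j(t)\md S^*_j(t)$, which cancels against the corresponding term of $\md\mathcal V^{\tilde A^{\text{b}}}$ in~\eqref{eq:VAtildeTdec} after one verifies $\tilde V(\tilde h,t)=\mathcal V^{\tilde A^{\text{b}}}(t)-\tilde A^{\text{b}}(t)$ from~\eqref{eq:htilde_0_opt} and~\eqref{eq:Vtildeht}.

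Next I would establish optimality. Because $L^{\tilde A^{\text{b}}}$ is a zero-mean $Q$-martingale orthogonal to $S^*_1,\ldots,S^*_d$, the candidate strategy has $\tilde C(\tilde h,\cdot)=\tilde C(\tilde h,0)+L^{\tilde A^{\text{b}}}(\cdot)$, so $\tilde R(\tilde h,t)=\E^Q[(L^{\tilde A^{\text{b}}}(T)-L^{\tilde A^{\text{b}}}(t))^2\mid\mathcal F(t)]$, giving~\eqref{eq:Rtildeht_optimal} immediately. For an arbitrary competing $0$-admissible strategy $h$, I would decompose $\tilde C(h,\cdot)-\tilde C(h,0)$ using~\eqref{eq:dCtildeht}; writing its martingale part via a Galtchouk-Kunita-Watanabe-type decomposition w.r.t.\ $S^*$ and using that $\tilde C(h,T)=\tilde C(\tilde h,T)$ (both equal $\tilde A^{\text{b}}(T)$ plus the same integral, since $\tilde V(h,T)=\tilde V(\tilde h,T)=0$), the difference $\tilde C(h,\cdot)-\tilde C(\tilde h,\cdot)$ is a $Q$-martingale; the orthogonality of $L^{\tilde A^{\text{b}}}$ to the $S^*$-integrals then yields $\tilde R(h,t)=\tilde R(\tilde h,t)+\E^Q[(\text{nonnegative remainder})\mid\mathcal F(t)]\ge\tilde R(\tilde h,t)$, with equality for all $t$ only if $h=\tilde h$ $\md B\times Q$-a.e. in the risky components; positive definiteness of $\sigma_S(t)$ following~\eqref{eq:sigmaB} forces uniqueness of the amount invested in each risky asset, and $0$-admissibility together with the prescribed total value then pins $h_0$. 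This is the standard projection argument, adapted.

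Finally, the value-process formula~\eqref{V_value}: from $\tilde V(\tilde h,t)=\mathcal V^{\tilde A^{\text{b}}}(t)-\tilde A^{\text{b}}(t)$ and~\eqref{eq:VAtildeT2} one gets $\tilde V(\tilde h,t)=\E^Q[\int_t^T e^{-\int_0^s r}e^{\int_0^s(\gamma r+\delta)}\md A^{\text{b}}(s)\mid\mathcal F(t)]$; undoing the modified discounting via~\eqref{eq:Vtildeht}, namely multiplying by $e^{-\int_0^t(\gamma r+\delta)}S_0(t)$, collapses the exponentials to $e^{-\int_t^s((1-\gamma)r-\delta)}$, which is~\eqref{V_value}. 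Alternatively, and more quickly, I would invoke~\eqref{eq:tilde_check_ident}: since $\tilde A^{\text{b}}=\check A^{\text{b},*}$ and the modified GKW decomposition coincides with~\eqref{eq:gkw_altern}, the candidate $\tilde h$ is precisely the before-tax restatement $h^*$ of $\check h^*$ from Subsection~\ref{subsec:after_tax} (compare~\eqref{eq:htilde_j_opt}--\eqref{eq:htilde_0_opt} with~\eqref{eq:relation1}--\eqref{eq:relation2}), whence~\eqref{V_value} is just~\eqref{V_check_value} under $V(\tilde h,t)=\check V(\check h^*,t)$. The main obstacle I anticipate is not any single computation but the bookkeeping in the optimality step: one must be careful that the ``remainder'' genuinely splits off as a sum of a stochastic integral w.r.t.\ $S^*$ plus an orthogonal martingale increment so that the cross term vanishes under $\E^Q[\cdot\mid\mathcal F(t)]$, and that the regularity conditions inherited from~\cite{thmoller01} (square-integrability of the modified cost, admissibility of the modified integrands) are in force for the modified integrators $(1-\gamma(t-))e^{\int_0^t(\gamma r+\delta)}\md S^*_j$ — this is where the assumptions that $\gamma$ is bounded away from $1$ and $\delta$ is bounded are used.
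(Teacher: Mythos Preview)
Your proposal follows essentially the same route as the paper: both exploit the structural identity~\eqref{eq:dCtildeht}, the Galtchouk--Kunita--Watanabe decomposition~\eqref{eq:VAtildeTdec}, orthogonality of $L^{\tilde A^{\text b}}$ to $S^*$, and positive definiteness of $\sigma_S$ for uniqueness. Your verification that the candidate satisfies $\md\tilde C(\tilde h,\cdot)=\md L^{\tilde A^{\text b}}(\cdot)$, your derivation of~\eqref{V_value}, and your uniqueness argument are all correct and match the paper.

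There is, however, a genuine slip in your optimality step. You assert that for any competing $0$-admissible $h$ one has $\tilde C(h,T)=\tilde C(\tilde h,T)$ because ``both equal $\tilde A^{\text b}(T)$ plus the same integral''. This is false: from~\eqref{eq:dCtildeht} with $\tilde V(\cdot,T)=0$,
\[
\tilde C(h,T)=\tilde A^{\text b}(T)-\sum_{j=1}^d\int_0^T (1-\gamma(u-))\,h_j(u)\,e^{\int_0^u(\gamma r+\delta)}\,\md S^*_j(u),
\]
and the stochastic integral depends on $h$, so terminal modified costs differ across strategies. Consequently your claim that $\tilde C(h,\cdot)-\tilde C(\tilde h,\cdot)$ is a $Q$-martingale also fails, since that difference contains $\tilde V(h,\cdot)-\tilde V(\tilde h,\cdot)$, which is unconstrained. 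The paper avoids this by computing $\tilde C(h,T)-\tilde C(h,t)$ directly for a \emph{generic} $0$-admissible $h$: inserting~\eqref{eq:VAtildeTdec} yields a three-term decomposition into an $\mathcal F(t)$-measurable piece $R_1(h,t)=\bigl(\mathcal V^{\tilde A^{\text b}}(t)-\tilde V(h,t)-\tilde A^{\text b}(t)\bigr)^2$, the increment $L^{\tilde A^{\text b}}(T)-L^{\tilde A^{\text b}}(t)$, and a stochastic integral over $(t,T]$ whose second moment is $R_2(h,t)$. Orthogonality then gives $\tilde R(h,t)=R_1(h,t)+R_2(h,t)+\E^Q\bigl[(L^{\tilde A^{\text b}}(T)-L^{\tilde A^{\text b}}(t))^2\,\big|\,\mathcal F(t)\bigr]$, and one minimizes $R_1$ and $R_2$ separately. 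Your argument is easily repaired along these lines; the rest of your sketch stands.
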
 
The proof of the result is presented below. First, we give an interpretation of the result and relate it to the risk-minimizing investment strategy in the after-tax and after-expense market, confer with Proposition~\ref{prop:after-tax} and the discussion thereafter.

By comparing~(\ref{eq:VAtildeTdec}) and~(\ref{eq:VA*Tdec}), we see that the quantity $h^{{\tilde{A}}^{\text{\em b}}}_j(t)$ can be interpreted as the number of assets $j$ at time $t$ in a risk-minimizing investment strategy for the modified payment process $\tilde{{A}}^{\text{b}}$ in the classic setting without taxes and expenses, see Section~\ref{sec:rm}. The solution of Theorem~\ref{te:rmtax} is a modification of this strategy which adjusts for the taxes and expenses via the factors $\left(1-\gamma(t-)\right)^{-1}$ and $e^{-\int_0^t \left(\gamma(u) r(u) + \delta(u)\right) \md u}$.

From~\eqref{V_value} we see that the current value of the investment strategy can be obtained as the conditional expected value of future payments given the information currently available but discounted with a modified short rate $((1-\gamma)r-\delta)$ rather than the original short rate $r$. 

The modified discount factor corresponds to using the after-tax and after-expense savings account rather than the before-tax and before-expense savings account as the reference for the time-value of money. Moreover, the value agrees with the value obtained in Subsection~\ref{subsec:after_tax} for the after-tax and after-expense market, confer with~\eqref{V_check_value}, i.e.\
\begin{align*}
	V(\tilde{h},t)
	=
	\check{V}(\check{h}^*,t)
\end{align*}
with ${\check{h}}^*$ given by \eqref{h_check_j} and~\eqref{h_check_0}. Actually, we see that ${\tilde{h}} = {h^*}$ with ${h^*}$ given by~\eqref{eq:h_star_j} and~\eqref{eq:h_star_0}. This proves the relation between after-tax and after-expense risk-minimiza-\ tion and tax- and expense-modified risk-minimization alluded to at the end of Subsection~\ref{subsec:after_tax}. In other words, strategies resulting from the two conceptually different approaches are mathematically equivalent in the sense that the sums invested in each asset are equal.

The proof of Theorem~\ref{te:rmtax} can in principle be based on Proposition~\ref{prop:after-tax}. Straightforward calculations show that the modified and discounted cost processes $\tilde{C}$ and $\check{C}^*$, respectively, are directly related by adjusting the investment strategies according to the mappings given by  \eqref{eq:h_star_j} and \eqref{eq:h_star_0}. Thus by essentially combining \eqref{eq:gkw_relation} and \eqref{eq:gkw_altern} with the identity \eqref{eq:tilde_check_ident}, the proof can be established. To better reveal what is going on behind the scenes, we provide a direct proof.
%The direct proof of Theorem~\ref{te:rmtax} given below only relies on techniques and not on results from classic risk-minimization. Alternatively, the proof of Theorem~\ref{te:rmtax} can in principle be based on Proposition~\ref{prop:after-tax}. Straightforward calculations show that
%\begin{align*}
%	\tilde{C}(h,t)
%	=
%	\check{C}^*(g(h),t)
%\end{align*}
%with the map $g=(g_0,\ldots,g_d)$ given by %$g(h)(0)=h(0)$,
%\begin{align*}
%	g_j(h_j)(t)
%	=
%	\frac{S_j(t-)}{\check{S}_j(t-)}
%	h_j(t)
%\end{align*}
%for $j=1,\ldots,d$ and
%\begin{align*}
%	g_0(h_0)(t)
%	=
%	h_0(t) \frac{S_0(t)}{\check{S}_0(t)}
%	- 
%	\left(\check{S}_0(t)\right)^{-1} \sum_{j=1}^d g_j(h_j)(t) \check{S}_j(t)
%	\left(
%	1 - \frac{S_j(t)}{S_j(t-)} \frac{\check{S}_j(t-)}{\check{S}_j(t)}
%	\right)\!,
%\end{align*}
%confer also with~\eqref{eq:h_star_j} and~\eqref{eq:h_star_0}. Minimization of ${\tilde{R}}$ then corresponds to a similar minimization of ${\check{R}}$, at which stage the results of Proposition~\ref{prop:after-tax} apply.
\begin{proof}[Proof of Theorem~\ref{te:rmtax}]
The result is proven by determining the quantity~\eqref{eq:Rtildeht} and minimizing it. Since by definition ${A}^{\text{tax}}(h,0)=0$ and ${A}^{\text{e}}(h,0)=0$, we see that $\tilde{C}({h},0) = \tilde{V}({h},0) + \tilde{A}^{\text{b}}(0)$. It now follows from~\eqref{eq:dCtildeht} and the definition of the modified cost process that
\begin{align}
	\tilde{C}({h},t) 
	%&= \tilde{C}(h,0) + \int_0^t \tilde{V}(h,\md u) - 
	%\sum_{j=1}^d \int_0^t \left(1-\gamma(u-)\right) h_j(u) e^{\int_0^u \left(\left(1-\gamma(\tau)\right)r(\tau) - \delta(\tau)\right) \md \tau} \md S^*_j(u) + \int_0^t \md \tilde{A}^{\text{b}}(u) \\
	&= \tilde{V}({h},t) + \tilde{A}^{\text{b}}(t) \nonumber \\
	&- \sum_{j=1}^d \int_0^t \left(1-\gamma(u-)\right) {h}_j(u) e^{\int_0^u \left(\gamma(\tau)r(\tau) + \delta(\tau)\right) \md \tau} \md S^*_j(u). \label{eq:tildeC_defa}
\end{align}
For $0$-admissible strategies ${\tilde{h}}$ we have that $V(\tilde{h},T)=0$ and hence also $\tilde{V}(\tilde{h},T)=0$.  
Moreover, (\ref{eq:VAtildeTdec}) implies that 
\begin{align}\label{eq:VAtildeT3}
	\tilde{A}^{\text{b}}(T) = \mathcal{V}^{\tilde{A}^{\text{b}}}(T) = \mathcal{V}^{\tilde{A}^{\text{b}}}(0) + \sum_{j=1}^d \int_0^T h^{\tilde{A}^{\text{b}}}_j(u) \md S^*_j(u) + L^{\tilde{A}^{\text{b}}}(T). 
\end{align}	
Now insert~\eqref{eq:VAtildeT3} into~\eqref{eq:tildeC_defa} with $t=T$ and use $\tilde{V}(\tilde{h},T)=0$ to see that
\begin{align*}
	\tilde{C}(\tilde{h},T) \nonumber
	&= \mathcal{V}^{\tilde{A}^{\text{b}}}(0) + L^{\tilde{A}^{\text{b}}}(T) \nonumber \\
	&+\sum_{j=1}^d \int_0^T \left( h^{\tilde{A}^{\text{b}}}_j(u) -  \left(1-\gamma(u-)\right) \tilde{h}_j(u) e^{\int_0^u \left(\gamma(\tau)r(\tau) + \delta(\tau)\right) \md \tau}\right) \! \md S^*_j(u) \nonumber \\
	&= \mathcal{V}^{\tilde{A}^{\text{b}}}(t) +  \left( L^{\tilde{A}^{\text{b}}}(T) -  L^{\tilde{A}^{\text{b}}}(t) \right) \nonumber \\ 
	&- \sum_{j=1}^d \int_0^t  \left(1-\gamma(u-)\right) \tilde{h}_j(u) e^{\int_0^u \left(\gamma(\tau)r(\tau) + \delta(\tau)\right) \md \tau} \md S^*_j(u)  \nonumber \\
	&+ \sum_{j=1}^d \int_t^T \left( h^{\tilde{A}^{\text{b}}}_j(u) -  \left(1-\gamma(u-)\right) \tilde{h}_j(u) e^{\int_0^u \left(\gamma(\tau)r(\tau) + \delta(\tau)\right) \md \tau}\right) \! \md S^*_j(u),
\end{align*}
where the last equality follows from~\eqref{eq:VAtildeTdec}. By combining the expressions for $\tilde{C}(\tilde{h},T)$ and $\tilde{C}(\tilde{h},t)$ and by using the orthogonality of the $Q$-martingales $S^*_j$ and $L^{\tilde{A}^{\text{b}}}$, we find that
\begin{align*}
	\tilde{R}(\tilde{h},t) &= R_1(\tilde{h},t)  + \E^Q\!\left[\left.\left( L^{\tilde{A}^{\text{b}}}(T) -  L^{\tilde{A}^{\text{b}}}(t) \right)^2 \,  \right| \mathcal{F}(t)\right]
	+ R_2(\tilde{h},t)
\end{align*}
with $R_1$ and $R_2$ given by
\begin{align*}
&R_1(\tilde{h},t)
=
 \left( \mathcal{V}^{\tilde{A}^{\text{b}}}(t) - \tilde{V}(\tilde{h},t) - \tilde{A}^{\text{b}}(t)  \right)^2 \\
&R_2(\tilde{h},t) \\
&=
\E^Q\!\left[\left.
\left(\sum_{j=1}^d \int_t^T \left(
 h^{\tilde{A}^{\text{b}}}_j(u) -  \left(1-\gamma(u-)\right) \tilde{h}_j(u) e^{\int_0^u \left(\gamma(\tau)r(\tau) + \delta(\tau)\right) \md \tau}
\right)\! \md S^*_j(u)    \right)^{\!\!2}  \,  \right| \mathcal{F}(t)\right]\!.
\end{align*}
The terms $R_1$ and $R_2$ can now be eliminated as follows. First, the term $R_2$ is eliminated by e.g.\ choosing $(\tilde{h}_1,\ldots,\tilde{h}_d)$ according to~\eqref{eq:htilde_j_opt}. By examining $R_1$, we realize that this term next can be eliminated if and only if
\begin{align*}
	e^{\int_0^t \left(\gamma(\tau)r(\tau) + \delta(\tau)\right) \md \tau} V^*(\tilde{h},t) = \tilde{V}(\tilde{h},t) = \mathcal{V}^{\tilde{A}^{\text{b}}}(t) -  \tilde{A}^{\text{b}}(t), 	
\end{align*}
i.e.\ if and only if
\begin{align*}
	V^*(\tilde{h},t) = e^{-\int_0^t \left(\gamma(\tau)r(\tau) + \delta(\tau)\right) \md \tau}
	\left( \mathcal{V}^{\tilde{A}^{\text{b}}}(t) - \tilde{A}^{\text{b}}(t)\right)\!,
\end{align*}
which is then obtained by choosing $\tilde{h}_0$ uniquely according to~\eqref{eq:htilde_0_opt}. This shows that $\tilde{h}$ given by~\eqref{eq:htilde_j_opt} and~\eqref{eq:htilde_0_opt} is a risk-minimizing investment strategy for ${A}^{\textit{\em b}}$ in the presence of taxes and expenses, and, further, establishes~\eqref{eq:Rtildeht_optimal} and~\eqref{V_value}.

To prove uniqueness, it remains to be shown that each $\tilde{h}_j$ for $j=1,\ldots,d$ is uniquely determined. First note that by Itô isometry and~\eqref{eq:sigmaB},
\begin{align*}
R_2(\tilde{h},0)
&=
\sum_{i=1}^d \sum_{j=1}^d  
\E^Q\!\left[  \int_0^T \alpha_i(t)\alpha_j(t) \md \langle S_i^*,S_j^*\rangle(t)
\right]\\
&=
\E^Q\!\left[\int_0^T \alpha^{\text{tr}}(t) \sigma_S(t) \alpha(t) \md B(t)
\right]\!,
\end{align*}
where $\alpha=(\alpha_1,\ldots,\alpha_d)$ is given by
\begin{align*}
\alpha_j(t) = h^{\tilde{A}^{\text{b}}}_j(t) -  \left(1-\gamma(t-)\right) \tilde{h}_j(t) e^{\int_0^t \left(\gamma(u)r(u) + \delta(u)\right) \md u}
\end{align*}
for $j=1,\ldots,d$. Recall that each $\sigma_S(t)$ is assumed positive definite and that $B$ is null at $0$ and strictly increasing. It follows that $R_2(\tilde{h},0)=0$ if and only if each $\alpha_j$ is zero, i.e.\ if and only if $(\tilde{h}_1,\ldots,\tilde{h}_d)$ is chosen according to~\eqref{eq:htilde_j_opt}. This proves uniqueness of the risk-minimizing investment strategy.
\end{proof}

\subsection{Two-step risk-minimization} \label{subsec:two-step}

In this subsection, we study two-step risk-minimization in the following sense. Assume that the original investor, say an insurer, adopts the risk-minimizing investment strategy ${\tilde{h}}$ in the presence of taxes and expenses given by Theorem~\ref{te:rmtax}. Then it faces the tax specific payments ${A}^{\text{tax}}({\tilde{h}})$ and expense payments ${A}^{\text{e}}({\tilde{h}})$ in addition to the original insurance payments ${A^{\!\text{b}}}$. We consider the scenario where a systemic investor, e.g.\ a re-insurer, assumes the payments and faces the problem of classic risk-minimization (in the absence of taxes and expenses) within the setup of Section~\ref{sec:rm}. The relevant classic insurance payment process ${A}$ is thus given by
\begin{align}\label{eq:payments_twostep}
	A(t)
	=
	A^{\text{b}}(t)
	+
	A^{\text{tax}}(\tilde{h},t)
	+
	A^{\text{e}}(\tilde{h},t),
\end{align}
where ${\tilde{h}}$ is determined by Theorem~\ref{te:rmtax} and thus fixed. To determine the classic risk-minimizing investment strategy for ${A}$, we now study the intrinsic value process $\mathcal{V}^{A^*}$ associated with $A^*$. The following lemma is key.

\begin{lemma}\label{lemma:intrinsic_twostep}
With ${A}$ given by~\eqref{eq:payments_twostep}, the corresponding intrinsic value process $\mathcal{V}^{A^*}$ has the following Galtchouk-Kunita-Watanabe decomposition:
\begin{align*}
\mathcal{V}^{A^*}(t)
	&=
	\mathcal{V}^{A^*}(0)
	+
	\sum_{j=1}^d \int_0^t \tilde{h}_j(u) \md S^*_j(u) + \int_0^t  e^{-\int_0^u \left(\gamma(\tau)r(\tau) + \delta(\tau)\right) \md \tau} \md L^{\tilde{A}^{\text{\em b}}}(u),
\end{align*}
with ${\tilde{h}}$ the risk-minimizing investment strategy of ${{A}^{\text{\em b}}}$ in the presence of taxes and expenses given by Theorem~\ref{te:rmtax} and with $L^{\tilde{A}^{\text{\em b}}}$ as in the Galtchouk-Kunita-Watanabe decomposition of $\mathcal{V}^{\tilde{A}^{\!\text{\em b}}}$, confer with \eqref{eq:VAtildeTdec}. Furthermore,
\begin{align*}
\mathcal{V}^{A^*}(t)
	&=
	A^*(t) + e^{-\int_0^t \left(\gamma(u)r(u) + \delta(u)\right) \md u}
	\left( \mathcal{V}^{\tilde{A}^{\text{\em b}}}(t) - \tilde{A}^{\text{\em b}}(t)\right)\!.
\end{align*}
\end{lemma}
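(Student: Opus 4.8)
The plan is to compute the intrinsic value process $\mathcal{V}^{A^*}(t)=\E^Q[A^*(T)\mid\mathcal{F}(t)]$ directly from the decomposition $A^*=A^{\text{b},*}+A^{\text{tax},*}(\tilde h)+A^{\text{e},*}(\tilde h)$ and then recognize the answer as the claimed Galtchouk-Kunita-Watanabe decomposition. First I would discount the two auxiliary payment streams. From~\eqref{eq:dAtaxht} and~\eqref{eq:dAexpht} one has $\md A^{\text{tax}}(\tilde h,t)=\gamma(t-)\sum_{j=0}^d\tilde h_j(t)\md S_j(t)$ and $\md A^{\text{e}}(\tilde h,t)=\delta(t)V(\tilde h,t)\md t$; dividing by $S_0$ and using $V^*(\tilde h,t)=\sum_j\tilde h_j(t)S_j^*(t)$ together with $\md S_0^{-1}(t)=-r(t)S_0^{-1}(t)\md t$, I would show that the discounted total cost contribution of taxes-plus-expenses can be written in terms of $\md S^*_j$ and $V^*(\tilde h,\cdot)\md t$. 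The cleanest route, however, is to avoid re-deriving these identities from scratch and instead invoke the cost-process computation already carried out for Theorem~\ref{te:rmtax}: by~\eqref{eq:dCtildeht} and the definition~\eqref{eq:Ctildeht} of $\tilde C$, and since $\tilde h$ is $0$-admissible with $V(\tilde h,T)=0$, one obtains from~\eqref{eq:tildeC_defa} that
\begin{align*}
\tilde C(\tilde h,T)=\tilde A^{\text{b}}(T)-\sum_{j=1}^d\int_0^T(1-\gamma(u-))\tilde h_j(u)e^{\int_0^u(\gamma(\tau)r(\tau)+\delta(\tau))\md\tau}\md S^*_j(u).
\end{align*}
On the other hand, unwinding the definition~\eqref{eq:Ctildeht} of $\tilde C$ in terms of $C^*$ and~\eqref{eq:Cwithtax}, and using $\tilde A^{\text{b}}=\check A^{\text{b},*}$, one can express $A^*(T)=A^{\text{b},*}(T)+A^{\text{tax},*}(\tilde h,T)+A^{\text{e},*}(\tilde h,T)$ through the same $\md S^*_j$-integrals; combining the two gives an explicit formula for $A^*(T)$ in terms of $\tilde A^{\text{b}}(T)$, a stochastic integral against $S^*$, and lower-order terms.

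Next I would take $\E^Q[\,\cdot\mid\mathcal{F}(t)]$ of this expression. The $\md S^*_j$-integrals are $Q$-martingales, so their conditional expectations telescope to the part over $[0,t]$; applying~\eqref{eq:htilde_j_opt} to replace $(1-\gamma(u-))\tilde h_j(u)e^{\int_0^u(\gamma r+\delta)}$ by $h^{\tilde A^{\text{b}}}_j(u)$, and using the GKW decomposition~\eqref{eq:VAtildeTdec} of $\mathcal{V}^{\tilde A^{\text{b}}}$ to rewrite $\sum_j\int_0^t h^{\tilde A^{\text{b}}}_j\md S^*_j=\mathcal{V}^{\tilde A^{\text{b}}}(t)-\mathcal{V}^{\tilde A^{\text{b}}}(0)-L^{\tilde A^{\text{b}}}(t)$, I would assemble $\mathcal{V}^{A^*}(t)$. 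The factor $e^{-\int_0^u(\gamma(\tau)r(\tau)+\delta(\tau))\md\tau}$ in front of $\md L^{\tilde A^{\text{b}}}$ should emerge precisely from the change of numeraire relating $\tilde C$ to $C^*$: the modification~\eqref{eq:Ctildeht} multiplies $\md C^*$ by $e^{\int_0^u(\gamma r+\delta)}$, so inverting it to recover the undiscounted-then-$S_0$-discounted quantities reintroduces the reciprocal exponential, and this is the one place where I would be careful since an integral $\int_0^t e^{\int_0^u(\gamma r+\delta)}\md L^{\tilde A^{\text{b}}}(u)$ converted back via the numeraire change becomes $\int_0^t e^{-\int_0^u(\gamma r+\delta)}\md(\text{something})$ only after an integration-by-parts-free cancellation that I should verify term by term.

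To finish, I would verify the two defining properties of a GKW decomposition. That $\tilde h_j$ (for $j=1,\dots,d$) satisfy the requisite regularity conditions follows from~\eqref{eq:htilde_j_opt} together with the corresponding conditions on $h^{\tilde A^{\text{b}}}_j$ and the boundedness hypotheses on $\gamma,\delta,r$. For orthogonality: $L^{\tilde A^{\text{b}}}$ is by~\eqref{eq:VAtildeTdec} a zero-mean $Q$-martingale strongly orthogonal to $S^*$, hence $\langle \int_0^\cdot e^{-\int_0^u(\gamma r+\delta)}\md L^{\tilde A^{\text{b}}}(u),\,S^*_j\rangle=\int_0^\cdot e^{-\int_0^u(\gamma r+\delta)}\md\langle L^{\tilde A^{\text{b}}},S^*_j\rangle=0$, and the integral against $L^{\tilde A^{\text{b}}}$ is again a zero-mean $Q$-martingale because the integrand is bounded and predictable. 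Uniqueness of the GKW decomposition (guaranteed by the positive-definiteness assumption following~\eqref{eq:sigmaB}) then pins the decomposition down. The second displayed identity, $\mathcal{V}^{A^*}(t)=A^*(t)+e^{-\int_0^t(\gamma r+\delta)}(\mathcal{V}^{\tilde A^{\text{b}}}(t)-\tilde A^{\text{b}}(t))$, I would read off either directly from the conditional-expectation computation (splitting $A^*(T)=A^*(t)+(A^*(T)-A^*(t))$ and evaluating the future part) or, more economically, by noting it is equivalent to $V(\tilde h,t)=\E^Q[\int_t^T e^{-\int_t^s((1-\gamma)r-\delta)}\md A^{\text{b}}]$, which is exactly~\eqref{V_value} from Theorem~\ref{te:rmtax}, combined with the relation $V(\tilde h,t)=S_0(t)V^*(\tilde h,t)$ and the formula for $V^*(\tilde h,t)$ obtained in that proof. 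The main obstacle I anticipate is the bookkeeping around the two successive numeraire changes — keeping straight which exponential factor attaches to which integrator when passing from $\tilde C$ back to $C^*$ and thence to $A^*$ — rather than any conceptual difficulty.
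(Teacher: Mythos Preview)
Your approach is correct but takes a genuinely different route from the paper. The paper first writes $A^*(t)$ directly as $A^{\text{b},*}(t)+\sum_j\int_0^t\gamma(u-)\tilde h_j(u)\md S^*_j(u)+\int_0^t(\gamma r+\delta)S_0^{-1}V(\tilde h,u)\md u$, then inserts the explicit value formula~\eqref{V_value} for $V(\tilde h,u)$ and evaluates the resulting double integral via iterated expectations and Fubini to obtain the second displayed identity; only afterwards does it differentiate (integration by parts) and substitute the GKW decomposition of $\mathcal{V}^{\tilde A^{\text{b}}}$ to get the first. Your plan instead exploits the cost-process machinery already built in the proof of Theorem~\ref{te:rmtax}: since the optimal $\tilde h$ makes $R_1=R_2=0$, one has $\tilde C(\tilde h,t)=\mathcal{V}^{\tilde A^{\text{b}}}(0)+L^{\tilde A^{\text{b}}}(t)$, hence $\md C^*(\tilde h,u)=e^{-\int_0^u(\gamma r+\delta)}\md L^{\tilde A^{\text{b}}}(u)$, and combining this with $A^*(T)=C^*(\tilde h,T)+\sum_j\int_0^T\tilde h_j\md S^*_j$ (from $0$-admissibility and $\md C^*=\md V^*-\sum\tilde h_j\md S^*_j+\md A^*$) gives the GKW decomposition immediately upon conditioning. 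This avoids the Fubini computation and the subsequent integration by parts entirely. The ``integration-by-parts-free cancellation'' you worry about is not needed once you recognize the identity $\md\tilde C(\tilde h)=\md L^{\tilde A^{\text{b}}}$; that single observation collapses the bookkeeping. For the second identity, your shortcut via~\eqref{V_value} works, but note that $\mathcal{V}^{A^*}(t)-A^*(t)=V^*(\tilde h,t)$ is what you are really using, and this drops out of the same cost-process relation applied at time $t$ rather than $T$.
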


\begin{proof}
By straightforward calculations we obtain the following expression:
\begin{align*}
	A^*(t)
	&=
	\int_0^t e^{-\int_0^u r(\tau) \md \tau} \md A(u) \\
	&=
	A^{\text{b},*}(t)
	+
	\sum_{j=1}^d
	\int_0^t \gamma(u-) \tilde{h}_j(u) \md S^*_j(u) \\
	&\hspace{16.6mm}+
	\int_0^t \left(\gamma(u)r(u) + \delta(u)\right) e^{-\int_0^u r(\tau) \md \tau} V(\tilde{h},u) \md u.
\end{align*}
Because the discounted price processes are $Q$-martingales, it follows that
\begin{align*}
	\mathcal{V}^{A^*}(t)
	&=
	\E^Q\!\left[ \left. A^*(T) \, \right| \mathcal{F}(t)\right] \\
	&=
	\mathcal{V}^{A^{\text{b},*}}(t)
	+
	A^{\text{tax},*}(\tilde{h},t)
	+
	A^{\text{e},*}(\tilde{h},t) \\
	&+
	\E^Q\!\left[ \left. \int_t^T \left(\gamma(u)r(u) + \delta(u)\right) e^{-\int_0^u r(\tau) \md \tau} V(\tilde{h},u) \md u\, \right| \mathcal{F}(t)\right].
\end{align*}
From \eqref{V_value},
\begin{align*}
	V(\tilde{h},u)
	=
	\E^Q\!\left[ \left. \int_u^T e^{-\int_u^s \left(\left(1-\gamma(\tau)\right)r(\tau) - \delta(\tau)\right) \md \tau} \md A^{\text{b}}(s) \, \right| \mathcal{F}(u)\right]\!.
\end{align*}
By the law of iterated expectations and by interchanging the order of integration, simple manipulations yield
\begin{align*}
	\hspace{10mm}&\hspace{-10mm}\E^Q\!\left[ \left. \int_t^T \left(\gamma(u)r(u) + \delta(u)\right) e^{-\int_0^u r(\tau) \md \tau} V(\tilde{h},u) \md u\, \right| \mathcal{F}(t)\right] \\
	&=
	\E^Q\!\left[ \left. \int_t^T e^{-\int_0^s r(\tau) \md\tau} 
	\int_t^s 
	\left(\gamma(u)r(u) + \delta(u)\right) e^{\int_u^s  \left(\gamma(\tau)r(\tau) + \delta(\tau)\right) \md \tau} \md u
	\md A^{\text{b}}(s)
	\, \right| \mathcal{F}(t)\right] \\
	&=
	\E^Q\!\left[ \left. \int_t^T e^{-\int_0^s r(\tau) \md\tau} 
	\left(e^{\int_t^s  \left(\gamma(\tau)r(\tau) + \delta(\tau)\right) \md \tau} - 1\right) \!\md A^{\text{b}}(s)
	\, \right| \mathcal{F}(t)\right] \\
	&=
	e^{-\int_0^t  \left(\gamma(\tau)r(\tau) + \delta(\tau)\right) \md \tau}\left(\mathcal{V}^{\tilde{A}^{\text{b}}}(t) - \tilde{A}^{\text{b}}(t)\right)
	-
	\left(\mathcal{V}^{{A}^{\text{b},*}}(t) - {A}^{\text{b},*}(t)\right),
\end {align*}
see also~\eqref{eq:VA*T} and~\eqref{eq:VAtildeT2}. Collecting everything, we obtain
\begin{align*}
	\mathcal{V}^{A^*}(t)
	&=
	\mathcal{V}^{A^{\text{b},*}}(t)
	+
	A^{\text{tax},*}(\tilde{h},t)
	+
	A^{\text{e},*}(\tilde{h},t) \\
	&+
	e^{-\int_0^t  \left(\gamma(\tau)r(\tau) + \delta(\tau)\right) \md \tau}\left(\mathcal{V}^{\tilde{A}^{\text{b}}}(t) - \tilde{A}^{\text{b}}(t)\right)
	-
	\left(\mathcal{V}^{{A}^{\text{b},*}}(t) - {A}^{\text{b},*}(t)\right) \\
	&=
	A^{\text{b},*}(t)
	+
	A^{\text{tax},*}(\tilde{h},t)
	+
	A^{\text{e},*}(\tilde{h},t)
	+
	e^{-\int_0^t  \left(\gamma(\tau)r(\tau) + \delta(\tau)\right) \md \tau}\left(\mathcal{V}^{\tilde{A}^{\text{b}}}(t) - \tilde{A}^{\text{b}}(t)\right) \\
	&=
	A^*(t)
	+
	e^{-\int_0^t  \left(\gamma(\tau)r(\tau) + \delta(\tau)\right) \md \tau}\left(\mathcal{V}^{\tilde{A}^{\text{b}}}(t) - \tilde{A}^{\text{b}}(t)\right)\!,
\end{align*}
as desired. Now using integration by parts and the definition of ${\tilde{A}^{\text{b}}}$, we find that
\begin{align*}
\md \mathcal{V}^{A^*}(t)
=
\sum_{j=1}^d \gamma(t-) \tilde{h}_j(t) \md S^*_j(t)
+
e^{-\int_0^t  \left(\gamma(\tau)r(\tau) + \delta(\tau)\right) \md \tau} \md \mathcal{V}^{\tilde{A}^{\text{b}}}(t).
\end{align*}
From the Galtchouk-Kunita-Watanabe decomposition of $\mathcal{V}^{\tilde{A}^{\text{b}}}$, confer with~\eqref{eq:VAtildeTdec}, and the identity~\eqref{eq:htilde_j_opt}, it follows that
\begin{align*}
\md \mathcal{V}^{A^*}(t)
&=
\sum_{j=1}^d \gamma(t-) \tilde{h}_j(t) \md S^*_j(t)
+
\sum_{j=1}^d (1-\gamma(t-)) \tilde{h}_j(t) \md S^*_j(t) \\
&+
e^{-\int_0^t  \left(\gamma(\tau)r(\tau) + \delta(\tau)\right) \md \tau} \md L^{\tilde{A}^{\text{b}}}(t) \\
&=
\sum_{j=1}^d \tilde{h}_j(t) \md S^*_j(t)
+
e^{-\int_0^t  \left(\gamma(\tau)r(\tau) + \delta(\tau)\right) \md \tau} \md L^{\tilde{A}^{\text{b}}}(t).
\end{align*}
Note that the final term is a zero-mean $Q$-martingale orthogonal to the discounted price processes, because this is the case for $L^{\tilde{A}^{\text{b}}}$. We conclude that $\mathcal{V}^{A^*}$ has Galtchouk-Kunita-Watanabe decomposition given by
\begin{align*}
	\mathcal{V}^{A^*}(t)
	&=
	\mathcal{V}^{A^*}(0)
	+
	\sum_{j=1}^d \int_0^t \tilde{h}_j(u) \md S^*_j(u) + \int_0^t  e^{-\int_0^u \left(\gamma(\tau)r(\tau) + \delta(\tau)\right) \md \tau} \md L^{\tilde{A}^{\text{\em b}}}(u),
\end{align*}
as desired.
\end{proof}

Combining Lemma~\ref{lemma:intrinsic_twostep} with the classic results reviewed in Section~\ref{sec:rm}, we obtain the main result of this subsection:

\begin{proposition}\label{prop:main_twostep}
With ${A}$ given by~\eqref{eq:payments_twostep} there exists a unique classic risk-minimizing investment strategy for ${A}$, and this investment strategy is identical to the unique risk-minimizing investment strategy for ${A}^{\!\textit{\em b}}$ in the presence of taxes and expenses.
\end{proposition}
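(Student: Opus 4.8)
The plan is to feed the Galtchouk-Kunita-Watanabe decomposition established in Lemma~\ref{lemma:intrinsic_twostep} directly into the classic risk-minimization characterization recalled in Section~\ref{sec:rm}, namely the formula for $h^*$ in terms of $h^{A^*}_j$ and $h^*_0$ just below~\eqref{eq:VA*Tdec}. Concretely, Lemma~\ref{lemma:intrinsic_twostep} tells us that in the decomposition $\mathcal{V}^{A^*}(t) = \mathcal{V}^{A^*}(0) + \sum_{j=1}^d \int_0^t h^{A^*}_j(u) \md S^*_j(u) + L^{A^*}(t)$ we have $h^{A^*}_j = \tilde{h}_j$ for $j=1,\ldots,d$ and $L^{A^*}(t) = \int_0^t e^{-\int_0^u (\gamma(\tau)r(\tau)+\delta(\tau))\md\tau}\md L^{\tilde{A}^{\text{b}}}(u)$, which is a zero-mean $Q$-martingale orthogonal to $S^*$ (this orthogonality and zero-mean property are already asserted in the proof of the lemma). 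Since each $\sigma_S(t)$ is assumed positive definite, the classic theory gives existence and uniqueness of the risk-minimizing strategy, and its risky part is exactly $h^{A^*}_j = \tilde{h}_j$.

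The remaining point is to check that the savings-account component also coincides, i.e.\ that the $0$-th coordinate $h^*_0$ prescribed by classic risk-minimization equals $\tilde{h}_0$ from~\eqref{eq:htilde_0_opt}. First I would write the classic prescription $h^*_0(t) = \mathcal{V}^{A^*}(t) - A^*(t) - \sum_{j=1}^d h^{A^*}_j(t) S^*_j(t) = \mathcal{V}^{A^*}(t) - A^*(t) - \sum_{j=1}^d \tilde{h}_j(t) S^*_j(t)$. Then I would substitute the second identity of Lemma~\ref{lemma:intrinsic_twostep}, $\mathcal{V}^{A^*}(t) - A^*(t) = e^{-\int_0^t (\gamma(u)r(u)+\delta(u))\md u}(\mathcal{V}^{\tilde{A}^{\text{b}}}(t) - \tilde{A}^{\text{b}}(t))$, to obtain $h^*_0(t) = e^{-\int_0^t (\gamma(u)r(u)+\delta(u))\md u}(\mathcal{V}^{\tilde{A}^{\text{b}}}(t) - \tilde{A}^{\text{b}}(t)) - \sum_{j=1}^d \tilde{h}_j(t) S^*_j(t)$, which is precisely~\eqref{eq:htilde_0_opt}. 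Hence $h^* = \tilde{h}$, completing the proof.

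I do not anticipate a genuine obstacle here: all the substantive work has been done in Lemma~\ref{lemma:intrinsic_twostep}, and this proposition is essentially a bookkeeping exercise that matches the two decompositions coordinate by coordinate and invokes the classic uniqueness statement (which relies on the positive-definiteness assumption following~\eqref{eq:sigmaB}). The only mild subtlety worth a sentence is confirming that $L^{A^*}$ inherits the required regularity and orthogonality so that the decomposition in Lemma~\ref{lemma:intrinsic_twostep} really is \emph{the} Galtchouk-Kunita-Watanabe decomposition of $\mathcal{V}^{A^*}$ and not merely \emph{a} decomposition of the stated form; this is immediate from uniqueness of that decomposition together with the orthogonality of $L^{\tilde{A}^{\text{b}}}$ to $S^*$ and the fact that stochastic integration against a predictable exponential preserves both the martingale property and orthogonality. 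With that remark in place, the classic theorem of Section~\ref{sec:rm} applies verbatim and yields both existence/uniqueness and the explicit form, which we have just identified with $\tilde{h}$.
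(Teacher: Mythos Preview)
Your proposal is correct and follows essentially the same approach as the paper: invoke the classic risk-minimization characterization from Section~\ref{sec:rm}, read off $h^*_j=\tilde{h}_j$ for $j=1,\ldots,d$ from the Galtchouk--Kunita--Watanabe decomposition in Lemma~\ref{lemma:intrinsic_twostep}, and then use the second identity of that lemma to verify $h^*_0=\tilde{h}_0$.
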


\begin{proof}

It follows from the results reviewed in Section~\ref{sec:rm} that there exists a unique classic risk-minimizing investment strategy ${h}^*$ for ${A}$ given by $h_j^*= \tilde{h}_j$ for $j=1,\ldots,d$, due to Lemma~\ref{lemma:intrinsic_twostep}, and
\begin{align*}
h_0^*(t)
=
\mathcal{V}^{A^*}(t)
-
A^*(t)
-
\sum_{j=1}^d  \tilde{h}_j(t) S_j^*(t).
\end{align*}
To establish the theorem, it remains to be shown that $h_0^*(t) = \tilde{h}_0(t)$. From~\eqref{eq:htilde_0_opt},
\begin{align*}
	\tilde{h}_0(t)  =  e^{-\int_0^t \left(\gamma(u) r(u) + \delta(u)\right) \md u}\left(\mathcal{V}^{\tilde{A}^{\text{\em b}}}(t) - \tilde{A}^{\text{\em b}}(t)\right)  - \sum_{j=1}^d \tilde{h}_j(t) S^*_j(t),
\end{align*}
such that it suffices to show that
\begin{align*}
\mathcal{V}^{A^*}(t) - A^*(t)
=
 e^{-\int_0^t \left(\gamma(u) r(u) + \delta(u)\right) \md u}\left(\mathcal{V}^{\tilde{A}^{\text{\em b}}}(t) - \tilde{A}^{\text{\em b}}(t)\right).
\end{align*}
But this also immediately follows from Lemma~\ref{lemma:intrinsic_twostep} thus completing the proof.
\end{proof}

Proposition~\ref{prop:main_twostep} allows us to draw the following conclusion. If a systemic investor assumes all payments, including taxes and expenses, of an original investor adopting the risk-minimizing investment strategy in the presence of taxes and expenses, and faces the problem of classic risk-minimization (in the absence of taxes and expenses), then the optimal strategy coincides with the investment strategy adopted by the original investor; in particular, additional risk reduction is impossible, and in this specific sense, tax- and expense-modified risk-minimization is consistent with classic risk-minimization.

\section{Case study: classic multi-state life insurance payments}\label{sec:ex}

We extend the classic life insurance setting, see e.g.\ \cite{hoem69,norberg1991,christiansen2012}, by allowing for investments in a bond market following a Vasicek term structure model with a deterministic tax rate and expenses depending on the current state of the insurance contract(s). Our setup and results are similar to earlier examples from the literature on risk-minimization, see e.g.~\cite{thmoller01} Subsection 3.1, with the primary new contribution being the inclusion of taxes and expenses; in particular, we derive the risk-minimizing investment strategy in the presence of taxes and expenses using the tools developed in Section~\ref{sec:tmrm}. Throughout the exposition, we explain how to extend the results to general term structure models.

In Subsection~\ref{subsec:ex_setup}, we introduce models for the market, insurance payment process, and taxes and expenses. Next, the risk-minimizing investment strategy in the presence of taxes and expenses is derived in Subsection~\ref{subsec:ex_riskmin}. Finally, we discuss valuation and computability with a view towards actuarial practice in Subsection~\ref{subsec:ex_dis}.

\subsection{Setup}\label{subsec:ex_setup}

The financial market consists of two assets with price processes $(S_0,S_1)$ driven by a stochastic short rate process $r$ following a Vasicek model. In other words, $r$ is an Ornstein-Uhlenbeck process satisfying the stochastic differential equation
\begin{align*}
\md r(t)
=
\kappa\left(\theta-r(t)\right) \md t + \sigma \md W(t),
\end{align*}
where $\kappa$, $\theta$, and $\sigma$ are positive constants and $W$ is a standard Brownian motion under an equivalent martingale measure $Q$.

The development of an underlying life insurance contract (or multiple contracts) is described by the classic multi-state Markov model of~\cite{hoem69}. Let $Z$ be a Markovian jump process with values in a finite set $\mathcal{J}=\{0,1,\ldots,J\}$ describing the state of the contract(s). The initial state of the contract(s) is taken to be $0$ such that $Z(0)=0$.

A multivariate counting process $N=(N_{jk})_{j,k \in \mathcal{J},k \neq j}$ is associated with the jump process $Z$ by setting $N_{jk}(0)=0$ and
\begin{align*}
N_{jk}(t) = \#\!\left\{s \in (0,t] \, : Z(s-) = j, Z(s) = k \right\} 
\end{align*}
for $t \in (0,T]$. The quantities $N_{jk}(t)$ can be interpreted as the number of transitions from state $j$ to state $k$ of the contract(s) within the time interval $[0,t]$.

We assume that $Z$ and the financial market given by $W$ are independent under $Q$, and we take the filtration $\mathbb{F}$ to be the $Q$-augmentation of the natural filtration of $Z$ and $W$.

As before, $S_0$ is the savings account taking the form
\begin{align*}
S_0(t) =\exp\!\left(\int_0^t r(u) \md u\right)\!.
\end{align*}
In addition to the savings account, the market contains a zero coupon bond with expiry at time $T>0$. The price process is:
\begin{align*}
S_1(t)
=
\E^Q\!\left[\left. \frac{S_0(t)}{S_0(T)} \, \right| \mathcal{F}(t) \right]
=
\E^Q\!\left[\left.  e^{-\int_t^T r(s) \md s} \, \right| \mathcal{F}(t) \right]\!.
\end{align*}
We assume there exists continuous functions $[0,T] \ni t \mapsto \mu_{jk}(t)$, $j,k \in \mathcal{J}$, $k \neq j$, such that $Z$ has transition intensities $\mu$ completely characterizing the distribution of $Z$. It follows that the processes $M_{jk}$ given by
\begin{align*}
M_{jk}(t)
=
N_{jk}(t)
-
\int_0^t \1{Z(s-) = j} \mu_{jk}(s) \md s
\end{align*}
are orthogonal martingales. Furthermore, each $M_{jk}$ is also orthogonal to the discounted price process $S_1^*=S_1/S_0$.

We are interested in payment processes related to the development of the insurance contract(s). Specifically, the insurance payment process $A^{\text{b}}$ has dynamics
\begin{align}\label{eq:ex_payments}
\md A^{\text{b}}(t)
=
\sum_{j \in \mathcal{J}} \left( \1{Z(t) = j} b_j(t) \md t + \sum_{k : k \neq j} b_{jk}(t) \md N_{jk}(t) \right)
\end{align}
for $t \in (0,T]$ while $A^{\text{b}}(0)$ is some initial deterministic premium. Here $b_j$ are deterministic sojourn payments  and $b_{jk}$ are deterministic transition payments all assumed measurable and bounded on bounded intervals. % In particular, $A^{\text{b}}$ is square integrable. NOTE: This is one of the unstated regularity conditions from Section 2; we will therefore not give the details here!
To keep the notation simple, we have disregarded lump-sum payments. An extension to more general payments is straightforward. 

Finally, we specify the structure of taxes and expenses. The tax rate $\gamma\in[0,1)$ is assumed to be constant in time and deterministic. In particular, the tax rate does not depend on the history of the insurance contract(s) or the history of the financial market. The expense rate $\delta$ is assumed to take the form
\begin{align*}
\delta(t)
=
\sum_{j\in\mathcal{J}} \1{Z(t)=j} \delta_j(t)
\end{align*}
for $t \in (0,T]$, where $\delta_j$ are continuous state-wise expense rates assumed deterministic. It follows that the expense rate only depends on the history of the insurance contract(s) through the present state, and that the expense rate does not depend on the history of the financial market.

\subsection{Tax- and expense-modified risk-minimization}\label{subsec:ex_riskmin}

Based on the Markovianity of the short-rate $r$, define $F$ and $F^{1-\gamma}$ by
\begin{align*}
F(t,r(t),s)
&=
\E^Q\!\left[\left.  e^{-\int_t^s r(u) \md u} \, \right| r(t) \right]
=
\E^Q\!\left[\left.  e^{-\int_t^s r(u) \md u} \, \right| \mathcal{F}(t) \right]
\!, \\
F^{1-\gamma}(t,r(t),s)
&=
\E^Q\!\left[\left.  e^{-\int_t^s (1-\gamma)r(u) \md u} \, \right| r(t) \right]
=
\E^Q\!\left[\left.  e^{-\int_t^s (1-\gamma)r(u) \md u} \, \right| \mathcal{F}(t) \right]\!.
\end{align*}
Note that $F(t,r(t),T)=S_1(t)$. Also note that
\begin{align*}
\md (1-\gamma)r(t)
=
\kappa\left((1-\gamma)\theta-(1-\gamma)r(t)\right)\! \md t + (1-\gamma)\sigma \md W(t),
\end{align*}
so that $t \mapsto (1-\gamma)r(t)$ is another Ornstein-Uhlenbeck process. Using explicit results for the Vasicek term structure model, see e.g.\ \cite{bjork2009} Proposition 24.3, we then find that
\begin{align}\label{eq:Vasicek_help}
F^{1-\gamma}_r(t,r(t),s)
=
(1-\gamma)F_r(t,r(t),s)\frac{F^{1-\gamma}(t,r(t),s)}{F(t,r(t),s)},
\end{align}
where $F_r(t,r,s)=\pderiv{}{r}F(t,r,s)$ and similarly for $F^{1-\gamma}_r(t,r,s)$.

Define also so-called expense deflated transition probabilities $p^{-\delta}$ by
\begin{align*}
p_{ij}^{-\delta}(t,s)
=
\E^Q\!\left[\left.  \mathds{1}_{\{Z(s) = j\}} e^{\int_t^s \delta_{Z(u)}(u) \md u} \, \right| Z(t) = i \right]\!.
\end{align*}
If the expense rates are zero, these are in fact the usual transition probabilities. In the general case, it follows from Appendix~\ref{ap:A} that the expense deflated transition probabilities satisfy systems of ordinary differential equations similar to Kolmogorov's backward and forward differential equations.

The tax- and expense-modified version $\tilde{A}^{\text{b}}$ of the discounted insurance payment process is given by
\begin{align*}
	\tilde{A}^{\text{b}}(t) &= A^{\text{b}}(0) +  \int_0^t e^{-\int_0^s \left((1-\gamma)r(u) - \delta(u)\right) \md u} \md A^{\text{b}}(s),
\end{align*}
confer with e.g.\ \eqref{eq:tilde_check_ident}.

The following approach follows along the lines of Section 3 in \cite{thmoller01}. Using the independence between $Z$ and the financial market, it can be shown that the intrinsic value process associated with $\tilde{A}^{\text{b}}$ can be written as
\begin{align}\label{eq:intrinsic_ex}
	\mathcal{V}^{\tilde{A}^{\text{b}}}(t) 
	&= \E^Q\!\left[ \left. \tilde{A}^{\text{b}}(T) \, \right| \mathcal{F}(t)\right]  \\ \label{eq:intrinsic_ex2}
	&= \tilde{A}^{\text{b}}(t) + e^{-\int_0^t \left((1-\gamma)r(u) - \delta(u)\right) \md u}
	\int_t^T F^{1-\gamma}(t,r(t),s) Y_{Z(t)}^{-\delta}(t,s)\md s,
\end{align}
where $Y_i^{-\delta}$ is given by
\begin{align*}
Y_i^{-\delta}(t,s)
=
\sum_{j \in \mathcal{J}} p_{ij}^{-\delta}(t,s)\!\left(
b_j(s) + \sum_{k : k \neq j} \mu_{jk}(s)b_{jk}(s)
\right)\!.
\end{align*}
This expression for the intrinsic value process is comparable to that of \cite{thmoller01} p.\ 426, and we may therefore proceed using the same techniques as in \cite{thmoller01} pp.\ 442--444.

Define so-called state-wise prospective reserves $V^{1-\gamma,\delta}_i$ by
\begin{align*}
V^{1-\gamma,\delta}_i(t)
=
\int_t^T F^{1-\gamma}(t,r(t),s)Y_i^{-\delta}(t,s) \md s.
\end{align*}
We are now ready to state the relevant Galtchouk-Kunita-Watanabe decomposition:
\begin{lemma}\label{lemma:GKW_vasi}
The Galtchouk-Kunita-Watanabe decomposition of $\mathcal{V}^{\tilde{A}^{\text{b}}}$ is given by
\begin{align*}
	\mathcal{V}^{\tilde{A}^{\text{b}}}(t) 
	=
	\mathcal{V}^{\tilde{A}^{\text{b}}}(0) 
	+
	\int_0^t \mathds{1}_{\{Z(s-)=i\}} \xi_i(s) \md S^*_1(s)
	+
	\sum_{j \in \mathcal{J}}\sum_{k : k \neq j} \int_0^t v_{jk}(s) \md M_{jk}(s).
\end{align*}
where
\begin{align}\label{eq:lemma_1}
\xi_i(t)
&=
(1-\gamma)e^{\int_0^t \left(\gamma r(u) + \delta(u)\right) \md u} \int_t^T
\frac{F_r(t,r(t),s)}{F_r(t,r(t),T)}
\frac{F^{1-\gamma}(t,r(t),s)}{F(t,r(t),s)}
Y_i^{-\delta}(t,s) \md s, \\ \label{eq:lemma_2}
v_{jk}(t)
&=
e^{-\int_0^t \left((1-\gamma)r(u) - \delta(u)\right) \md u}
\left(b_{jk}(t) + V^{1-\gamma,\delta}_k(t) - V^{1-\gamma,\delta}_j(t)\right).
\end{align}
\end{lemma}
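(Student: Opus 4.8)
The plan is to compute the Itô dynamics of the intrinsic value process $\mathcal{V}^{\tilde{A}^{\text{b}}}$ using the explicit expression~\eqref{eq:intrinsic_ex2}, and then identify the integrands against $S_1^*$ and against the martingales $M_{jk}$ by matching terms; uniqueness of the Galtchouk-Kunita-Watanabe decomposition will then finish the argument. Concretely, write $\mathcal{V}^{\tilde{A}^{\text{b}}}(t) = \tilde{A}^{\text{b}}(t) + e^{-\int_0^t((1-\gamma)r(u)-\delta(u))\md u} V^{1-\gamma,\delta}_{Z(t)}(t)$ and differentiate the product. The term $\md \tilde{A}^{\text{b}}(t)$ contributes the ``current payment'' part, the deterministic exponential factor contributes drift, and the two genuinely stochastic drivers are $r$ (through $W$) and $Z$ (through the $N_{jk}$, hence the $M_{jk}$). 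Since $\mathcal{V}^{\tilde{A}^{\text{b}}}$ is by construction a $Q$-martingale, all drift terms must cancel, so I will only need to track the martingale parts carefully: the $\md W$ part must be re-expressed as a multiple of $\md S_1^*$, and the jump part must be re-expressed via the compensated martingales $M_{jk}$.

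First I would handle the $W$-driven part. Using $S_1(t)=F(t,r(t),T)$ and Itô, $\md S_1^*(t)$ has diffusion coefficient proportional to $F_r(t,r(t),T)\sigma$ (divided by $S_0(t)$); similarly the $\md W$ coefficient coming from $F^{1-\gamma}(t,r(t),s)$ inside the integral defining $V^{1-\gamma,\delta}_{Z(t)}(t)$ is proportional to $F^{1-\gamma}_r(t,r(t),s)\sigma$. Taking the ratio and invoking the Vasicek identity~\eqref{eq:Vasicek_help}, $F^{1-\gamma}_r(t,r,s) = (1-\gamma)F_r(t,r,s) F^{1-\gamma}(t,r,s)/F(t,r,s)$, the dependence on $\sigma$ and on $F_r(t,r(t),T)$ divides out in exactly the way needed to produce the integrand $\xi_i(t)$ in~\eqref{eq:lemma_1}; the exponential prefactors $e^{-\int_0^t((1-\gamma)r-\delta)}$ and $e^{\int_0^t(\gamma r+\delta)}$ combine with the $S_0$-discounting to give the stated $e^{\int_0^t(\gamma r(u)+\delta(u))\md u}$ factor. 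This is the step where the specific structure of the Vasicek model is used, and it is the main obstacle: one must be careful that $V^{1-\gamma,\delta}_{Z(t)}$ depends on $r(t)$ only through the $F^{1-\gamma}(t,r(t),s)$ factors (the expense-deflated probabilities $p^{-\delta}_{ij}$ and the $Y^{-\delta}_i$ are independent of the financial market by assumption), so differentiating in $r$ only hits those factors, and that the quotient $F_r(t,r(t),s)/F_r(t,r(t),T)$ is well-defined.

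Next I would handle the jump part. Writing $V^{1-\gamma,\delta}_{Z(t)}(t)$ and $\tilde{A}^{\text{b}}(t)$, a transition of $Z$ from $j$ to $k$ at time $t$ produces an instantaneous jump: the payment $b_{jk}(t)$ enters $\md \tilde{A}^{\text{b}}$ (discounted by $e^{-\int_0^t((1-\gamma)r-\delta)}$), and the reserve jumps from $V^{1-\gamma,\delta}_j(t)$ to $V^{1-\gamma,\delta}_k(t)$. Hence the coefficient multiplying $\md N_{jk}(t)$ is exactly $e^{-\int_0^t((1-\gamma)r(u)-\delta(u))\md u}(b_{jk}(t)+V^{1-\gamma,\delta}_k(t)-V^{1-\gamma,\delta}_j(t))$, which is $v_{jk}(t)$ as in~\eqref{eq:lemma_2}. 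Replacing $\md N_{jk}$ by $\md M_{jk}$ plus its compensator transfers the compensator into the (necessarily vanishing) drift. Finally, since each $M_{jk}$ is a zero-mean $Q$-martingale orthogonal to $S_1^*$ (stated in the setup), the candidate martingale $L^{\tilde{A}^{\text{b}}}(t)=\sum_{j,k}\int_0^t v_{jk}(s)\md M_{jk}(s)$ is orthogonal to $S_1^*$, so by uniqueness of the Galtchouk-Kunita-Watanabe decomposition the identification is complete. A secondary bookkeeping obstacle is verifying that the drift terms indeed cancel — rather than checking this by brute force, I would simply invoke that $\mathcal{V}^{\tilde{A}^{\text{b}}}$ is a $Q$-martingale by its definition~\eqref{eq:VAtildeT}, so once the martingale part is correctly identified the drift is automatically zero, and~\eqref{eq:lemma_1}--\eqref{eq:lemma_2} follow.
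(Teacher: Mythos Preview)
Your proposal is correct and follows essentially the same route as the paper: compute the It\^o dynamics of $\mathcal{V}^{\tilde{A}^{\text{b}}}(t)=\tilde{A}^{\text{b}}(t)+e^{-\int_0^t((1-\gamma)r-\delta)}V^{1-\gamma,\delta}_{Z(t)}(t)$, read off the $\md W$ coefficient from the $F^{1-\gamma}_r$ terms, convert to $\md S_1^*$ via the Vasicek identity~\eqref{eq:Vasicek_help} and $\md S_1^*(t)=e^{-\int_0^t r}F_r(t,r(t),T)\sigma\,\md W(t)$, and identify the jump coefficients as the discounted sum-at-risk $v_{jk}$. The only noteworthy difference is in bookkeeping: the paper explicitly invokes the Kolmogorov-type backward equations of Appendix~\ref{ap:A} (following \cite{thmoller01}, pp.~443--444) to handle the $t$-dependence of $Y_i^{-\delta}(t,s)$ and see the drift cancel, whereas you shortcut this by appealing to the martingale property of $\mathcal{V}^{\tilde{A}^{\text{b}}}$; both are valid, and your shortcut is a clean way to avoid the ODE computation.
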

\begin{proof}
The proof mirrors the proof of~\cite{thmoller01} Lemma 3.2, although under relaxed regularity conditions. We therefore only sketch the essential steps with a focus on the complications that arrive due to the inclusion of taxes and expenses. 

First, one takes a closer look at the dynamics of
\begin{align*}
e^{-\int_0^t \left((1-\gamma)r(u) - \delta(u)\right) \md u} V^{1-\gamma,\delta}_i(t).
\end{align*}
Using a system of ordinary differential equations similar to Kolmogorov's backward differential equations, see Appendix~\ref{ap:A}, and then proceeding along the lines of~\cite{thmoller01} pp.\ 443--444, it is then possible to show that
\begin{align*}
	\mathcal{V}^{\tilde{A}^{\text{b}}}(t) 
	=
	\mathcal{V}^{\tilde{A}^{\text{b}}}(0) 
	+
	\int_0^t \mathds{1}_{\{Z(s-)=i\}} \tilde{\xi}_i(s) \md W(s)
	+
	\sum_{j \in \mathcal{J}}\sum_{k : k \neq j} \int_0^t v_{jk}(s) \md M_{jk}(s),
\end{align*}
where $v_{jk}$ is given by \eqref{eq:lemma_2} and
\begin{align}\label{eq:tildeXi}
\tilde{\xi}_i(t)
&=
%e^{\int_0^t \left(\gamma r(u) + \delta(u)\right) \md u}
e^{-\int_0^t \left((1-\gamma)r(u) - \delta(u)\right) \md u}
\sigma
\int_t^T
F_r^{1-\gamma}(t,r(t),s)
Y_i^{-\delta}(t,s) \md s.
\end{align}
Using the Vasicek term structures, we next find that
\begin{align*}
\tilde{\xi}_i(t)
&=
e^{-\int_0^t \left((1-\gamma)r(u) - \delta(u)\right) \md u} (1-\gamma)
\sigma
\int_t^T
F_r(t,r(t),s)\frac{F^{1-\gamma}(t,r(t),s)}{F(t,r(t),s)}
Y_i^{-\delta}(t,s) \md s \\
&=
e^{\int_0^t \left(\gamma r(u) + \delta(u)\right) \md u} (1-\gamma)
\int_t^T
\frac{F_r(t,r(t),s)}{F_r(t,r(t),T)}\frac{F^{1-\gamma}(t,r(t),s)}{F(t,r(t),s)}
Y_i^{-\delta}(t,s) \md s \\
&\times e^{-\int_0^t r(s) \md s}
F_r(t,r(t),T)
\, \sigma,
\end{align*}
see also~\eqref{eq:Vasicek_help}. In other words,
\begin{align*}
\tilde{\xi}_i(t)
=
\xi_i(t)
e^{-\int_0^t r(s) \md s}
F_r(t,r(t),T)
\, \sigma
\end{align*}
with $\xi_i$ defined by \eqref{eq:lemma_1}. Now recall that
\begin{align*}
\md S^*_1(t)
=
e^{-\int_0^t r(s) \md s}
F_r(t,r(t),T)
\, \sigma \md W(t),
\end{align*}
from which it follows that
\begin{align*}
\tilde{\xi}_i(t) \md W(t)
=
\xi_i(t) \md S^*_1(t),
\end{align*}
completing the sketch of proof.
%The proof does not really rely on Vasicek: only sufficient regularity and Markov property of r.
\end{proof}
\begin{remark}\label{rmk:extend}
For a general short rate model, the proof technique of Lemma~\ref{lemma:GKW_vasi} still applies and similar results as in the Vasicek term structure model remain obtainable. Assume the short rate satisfies the stochastic differential equation
\begin{align*}
\md r(t)
=
\alpha(t,r(t)) \md t + \sigma(t,r(t)) \md W(t),
\end{align*}
with $W$ still a standard Brownian motion under $Q$ and where $\alpha$ and $\sigma$ are functions satisfying certain Lipschitz conditions. Imposing suitable additional regularity conditions on the short rate model (equivalently, the term structure model), one finds
\begin{align*}
\tilde{\xi}_i(t)
&=
e^{-\int_0^t \left((1-\gamma)r(u) - \delta(u)\right) \md u}
\sigma(t,r(t))
\int_t^T
F^{1-\gamma}_r(t,r(t),s)
Y_i^{-\delta}(t,s) \md s \\
&=
e^{\int_0^t \left(\gamma r(u) + \delta(u)\right) \md u}
\int_t^T
\frac{F^{1-\gamma}_r(t,r(t),s)}{F_r(t,r(t),T)}
Y_i^{-\delta}(t,s) \md s \, \,
e^{-\int_0^t r(s) \md s}
F_r(t,r(t),T)
\, \sigma(t,r(t)),
\end{align*}
by following along the lines of the sketch of proof of Lemma~\ref{lemma:GKW_vasi}. This results in the following Galtchouk-Kunita-Watanabe decomposition:
\begin{align*}
	\mathcal{V}^{\tilde{A}^{\text{b}}}(t) 
	=
	\mathcal{V}^{\tilde{A}^{\text{b}}}(0) 
	+
	\int_0^t \mathds{1}_{\{Z(s-)=i\}} \xi_i(s) \md S^*_1(s)
	+
	\sum_{j \in \mathcal{J}}\sum_{k : k \neq j} \int_0^t v_{jk}(s) \md M_{jk}(s).
\end{align*}
where now
\begin{align*}
\xi_i(t)
&=
e^{\int_0^t \left(\gamma r(u) + \delta(u)\right) \md u} \int_t^T
\frac{F^{1-\gamma}_r(t,r(t),s)}{F_r(t,r(t),T)}
Y_i^{-\delta}(t,s) \md s, \\
v_{jk}(t)
&=
e^{-\int_0^t \left((1-\gamma)r(u) - \delta(u)\right) \md u}
\left(b_{jk}(t) + V^{1-\gamma,\delta}_k(t) - V^{1-\gamma,\delta}_j(t)\right).
\end{align*}
\end{remark}
As we have identified the Galtchouk-Kunita-Watanabe decomposition of $\mathcal{V}^{\tilde{A}^{\text{b}}}$, we are now ready to apply the results on tax- and expense-modified risk-minimization to obtain the main result of this section:
\begin{theorem}\label{thm:example}
The unique risk-minimizing investment strategy $\tilde{h}$ in the setting of Section~\ref{sec:ex} is given as follows:
\begin{align*}
\tilde{h}_1(t)
&=
\int_t^T
\frac{F_r(t,r(t),s)}{F_r(t,r(t),T)}\frac{F^{1-\gamma}(t,r(t),s)
}
{F(t,r(t),s)
} Y_{Z(t-)}^{-\delta}(t,s) \md s, \\
\tilde{h}_0(t)
&=
S_0^{-1}(t)\left(V^{1-\gamma,\delta}_{Z(t)}(t) - \tilde{h}_1(t)S_1(t)\right)\!.
\end{align*}
The associated value process is
\begin{align*}
V(\tilde{h},t)
=
V^{1-\gamma,\delta}_{Z(t)}(t).
\end{align*}
\end{theorem}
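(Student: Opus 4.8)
The plan is to obtain Theorem~\ref{thm:example} as a direct application of Theorem~\ref{te:rmtax} to the Galtchouk-Kunita-Watanabe decomposition identified in Lemma~\ref{lemma:GKW_vasi}. Concretely, Theorem~\ref{te:rmtax} expresses the risk-minimizing strategy in the presence of taxes and expenses through the integrand $h^{\tilde{A}^{\text{b}}}_j$ of the decomposition~\eqref{eq:VAtildeTdec}; here $d=1$ and, reading off Lemma~\ref{lemma:GKW_vasi}, we have $h^{\tilde{A}^{\text{b}}}_1(t)=\mathds{1}_{\{Z(t-)=i\}}\xi_i(t)$ with $\xi_i$ given by~\eqref{eq:lemma_1}. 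Thus the first step is simply to substitute this expression into~\eqref{eq:htilde_j_opt}.

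The second step is the cancellation of prefactors. Formula~\eqref{eq:htilde_j_opt} multiplies $h^{\tilde{A}^{\text{b}}}_1$ by $(1-\gamma)^{-1}e^{-\int_0^t(\gamma r(u)+\delta(u))\md u}$, while $\xi_i$ in~\eqref{eq:lemma_1} carries exactly the reciprocal factor $(1-\gamma)\,e^{\int_0^t(\gamma r(u)+\delta(u))\md u}$ (recall $\gamma$ is constant here). These cancel, leaving
\begin{align*}
\tilde{h}_1(t)
=
\int_t^T
\frac{F_r(t,r(t),s)}{F_r(t,r(t),T)}\frac{F^{1-\gamma}(t,r(t),s)}{F(t,r(t),s)}
Y_{Z(t-)}^{-\delta}(t,s)\md s,
\end{align*}
as claimed (using $\mathds{1}_{\{Z(t-)=i\}}Y^{-\delta}_i = Y^{-\delta}_{Z(t-)}$). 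For $\tilde{h}_0$ one inserts into~\eqref{eq:htilde_0_opt} and uses~\eqref{eq:intrinsic_ex2} together with the definition of $V^{1-\gamma,\delta}_i$ and of $\tilde{A}^{\text{b}}$: the combination $e^{-\int_0^t(\gamma r(u)+\delta(u))\md u}(\mathcal{V}^{\tilde{A}^{\text{b}}}(t)-\tilde{A}^{\text{b}}(t))$ equals $e^{-\int_0^t r(u)\md u}V^{1-\gamma,\delta}_{Z(t)}(t)=S_0^{-1}(t)V^{1-\gamma,\delta}_{Z(t)}(t)$, and subtracting $\tilde h_1(t)S_1^*(t)=S_0^{-1}(t)\tilde h_1(t)S_1(t)$ gives the stated $\tilde{h}_0$. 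The value process follows from~\eqref{V_value} in Theorem~\ref{te:rmtax}: the integrand there discounts $\md A^{\text{b}}$ with $((1-\gamma)r-\delta)$, and by conditioning first on the path of $r$ and then on $Z$ (independence) and recognizing $F^{1-\gamma}$ and the $p^{-\delta}$, one obtains $V(\tilde{h},t)=\int_t^T F^{1-\gamma}(t,r(t),s)Y^{-\delta}_{Z(t)}(t,s)\md s = V^{1-\gamma,\delta}_{Z(t)}(t)$; alternatively it is immediate from $V(\tilde h,t)=S_0(t)V^*(\tilde h,t)$ and the displayed identity for $V^*(\tilde h,t)$ in the proof of Theorem~\ref{te:rmtax}.

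The only genuine work is bookkeeping: checking that the exponential and $(1-\gamma)$ factors from~\eqref{eq:htilde_j_opt}--\eqref{eq:htilde_0_opt} annihilate those carried by $\xi_i$ and by the intrinsic value expression~\eqref{eq:intrinsic_ex2}, and verifying the $Q$-expectation computation for $V(\tilde h,t)$ via the independence of $Z$ and $W$. Uniqueness requires no separate argument — it is already asserted in Theorem~\ref{te:rmtax} and transfers verbatim, the positive-definiteness hypothesis following~\eqref{eq:sigmaB} being satisfied since $\md\langle S_1^*\rangle(t)=(e^{-\int_0^t r}F_r(t,r(t),T)\sigma)^2\md t$ is (almost surely) strictly increasing. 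I anticipate no real obstacle; the main pitfall is purely clerical, namely keeping the distinction between discounting with $r$ versus with $(1-\gamma)r-\delta$ straight throughout, and remembering that the factor $F_r(t,r(t),T)^{-1}$ appearing in $\xi_i$ is precisely what converts the $\md W$-integrand $\tilde\xi_i$ into the $\md S_1^*$-integrand, as spelled out at the end of the proof of Lemma~\ref{lemma:GKW_vasi}.
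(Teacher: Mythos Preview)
Your proposal is correct and follows essentially the same approach as the paper: combine Lemma~\ref{lemma:GKW_vasi} with Theorem~\ref{te:rmtax}, cancel the prefactors in~\eqref{eq:htilde_j_opt} against those in $\xi_i$, and use the identity $e^{-\int_0^t(\gamma r(u)+\delta(u))\md u}\bigl(\mathcal{V}^{\tilde A^{\text{b}}}(t)-\tilde A^{\text{b}}(t)\bigr)=S_0^{-1}(t)V^{1-\gamma,\delta}_{Z(t)}(t)$ from~\eqref{eq:intrinsic_ex2} for $\tilde h_0$ and the value process. The paper's own proof is even terser but proceeds identically.
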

\begin{proof}
The first statement follows immediately by combining Lemma~\ref{lemma:GKW_vasi} with Theorem~\ref{te:rmtax} and the observation
\begin{align*}
e^{-\int_0^t \left(\gamma r(u) + \delta(u)\right) \md u}
\left(
\mathcal{V}^{\tilde{A}^{\text{b}}}(t) -\tilde{A}^{\text{b}}(t)
\right)
&=
S_0^{-1}(t)
\int_t^T F^{1-\gamma}(t,r(t),s) Y_{Z(t)}^{-\delta}(t,s)\md s \\
&=
S_0^{-1}(t) V^{1-\gamma,\delta}_{Z(t)}(t),
\end{align*}
confer with~\eqref{eq:intrinsic_ex2}. The last statement follows by direct calculations from the first statement and \eqref{eq:Vhtdef}.
\end{proof}

\begin{remark}\label{rmk:extend2}
In Remark~\ref{rmk:extend} we discussed extensions of the Galtchouk-Kunita-Watanabe decomposition of Lemma~\ref{lemma:GKW_vasi} to general short rate models. Based on this discussion, we can extend the conclusions of Theorem~\ref{thm:example} to the general framework of Remark~\ref{rmk:extend} in the following manner.

Assume the short rate satisfies the stochastic differential equation
\begin{align*}
\md r(t)
=
\alpha(t,r(t)) \md t + \sigma(t,r(t)) \md W(t),
\end{align*}
with $W$ still a standard Brownian motion under $Q$ and where $\alpha$ and $\sigma$ are functions satisfying certain Lipschitz conditions. Imposing suitable additional regularity conditions, the unique risk-minimizing investment strategy $\tilde{h}$ is given by
\begin{align*}
\tilde{h}_1(t)
&=
\int_t^T \frac{1}{1-\gamma}
\frac{F^{1-\gamma}_r(t,r(t),s)
}
{F_r(t,r(t),T)
} Y_{Z(t-)}^{-\delta}(t,s) \md s, \\
\tilde{h}_0(t)
&=
S_0^{-1}(t)\left(V^{1-\gamma,\delta}_{Z(t)}(t) - \tilde{h}_1(t)S_1(t)\right)\!.
\end{align*}
The associated value process is still
\begin{align*}
V(\tilde{h},t)
=
V^{1-\gamma,\delta}_{Z(t)}(t).
\end{align*}
\end{remark}

\subsection{Discussion}\label{subsec:ex_dis}

The unique risk-minimizing investment strategy $\tilde{h}$ given by Theorem~\ref{thm:example} is a modification of the classic strategy without taxes and expenses. The quantity $Y_{Z(t-)}^{-\delta}(t,s)$ is the expected future (diversified) rate of payments at time $s$ given the present state of the insurance contract(s) at time $t$ while taking future state-wise expenses into account; it can be interpreted as an expected expense-modified cash flow. 

To cover the future infinitesimal expected payment at time $s$ the strategy dictates an investment of
\begin{align}\label{eq:product}
\frac{F_r(t,r(t),s)}{F_r(t,r(t),T)}\frac{F^{1-\gamma}(t,r(t),s)
}
{F(t,r(t),s)
} Y_{Z(t-)}^{-\delta}(t,s) \md s
\end{align}
into the bond. Thus what regards investment in the risky asset, taxes are taken into account by increasing the investment by a factor of
\begin{align*}
\frac{F^{1-\gamma}(t,r(t),s)
}
{F(t,r(t),s)
} \geq 1,
\end{align*}
confer also with the discussion in~\cite{buchardtmoeller2018} Section 4, in particular~\cite{buchardtmoeller2018} Subsection 4.3.2. The product structure of \eqref{eq:product} w.r.t.\ taxes and expenses is a direct consequence of the independence between market and insurance risks and the fact that tax rate does not depend on the history of the insurance contract(s) nor the market while the expense rate only depends on the history of the insurance contract(s).

To explicitly compute the risk-minimizing investment strategy and the associated value process, one needs to calculate $F$, $F_r$, and $F^{1-\gamma}$ as well as the expense deflated transition probabilities $p_{ij}^{-\delta}$. For the Vasicek term structure model, where in particular $t \mapsto (1-\gamma)r$ is another Ornstein-Uhlenbeck process, the former quantities have closed-form expressions and are therefore easily calculated. Because the expense deflated transition probabilities $p_{ij}^{-\delta}$ can be found by solving a system of ordinary differential equations similar to Kolmogorov's forward differential equations, see Appendix~\ref{ap:A}, this establishes a simple scheme for the computation of the risk-minimizing investment strategy and the associated value process.

In Remark~\ref{rmk:extend2} we elaborated on how to extend Theorem~\ref{thm:example} to general short rate models. If the model is affine, the relevant quantities needed for computation of the risk-minimizing investment strategy and the associated value process, i.e.\ $F$, $F_r^{1-\gamma}$, and $F^{1-\gamma}$, can be calculated by solving systems of ordinary differential equations, see  \cite{duffiepansingleton,buchardt2016}. The model of Section~\ref{sec:ex} can therefore easily be implemented in practice; and furthermore, the extension to general affine term structure models is relatively straightforward.

\section*{Acknowledgments and declarations of interest}

Christian Furrer’s research is partly funded by the Innovation Fund Denmark (IFD) under File No.\ 7038-00007. The authors declare no competing interests.
  
\appendix

\section{Deflated transition probabilities}\label{ap:A}

Let $(\Omega,\mathcal{F},P)$ be a background probability space, and let $Z$ be a Markovian jump process with values in a finite set $\mathcal{J}=\{0,1,\ldots,n-1\}$. Let $N$ be the multivariate counting process associated with $Z$. The transition probabilities of $Z$ are given by $n\times n$-matrices $p(t,s)$ for $0 \leq t \leq s < \infty$, where
\begin{align*}
p_{ij}(t,s)
=
P\!\left[\left.  Z(s) = j \, \right| Z(t) = i \right]\!,
\end{align*}
and the transition probabilities satisfy the Chapman-Kolmogorov equation.

We assume the existence of continuous transition intensities $\mu_{jk}$, when each counting process $N_{jk}$ has intensity process $\lambda_{jk}$ given by
\begin{align*}
\lambda_{jk}(t) = \mathds{1}_{\{Z(t-) = j\}} \mu_{jk}(t).
\end{align*}
We can then take regular versions of the conditional distributions for which the transition probabilities $p$ satisfy
\begin{align*}
\mu(t)
=
\lim_{h \searrow 0} \frac{
p(t,t+h) - p(t,t)
}
{h},
\end{align*}
where $\mu$ are $n\times n$-matrices with diagonal elements
\begin{align*}
\mu_{jj} = -\sum_{k : k \neq j} \mu_{jk}.
\end{align*}
Furthermore, the transition probabilities satisfy Kolmogorov' backward and forward differential equations.

Let $\delta$ be $n\times1$-dimensional with deterministic and continuous elements $t \mapsto \delta_i(t)$. Quantities of interest are (corresponding regular versions) of
\begin{align*}
 p^\delta_{ij}(t,s) &= \E\left[\left. 1_{\{Z(s)=j\}} e^{-\int_t^s \delta_{Z(u)}(u) \md u} \right| Z(t) = i\right]\!,
\end{align*}
which we term \textit{$\delta$-deflated transition probabilities}. When $\delta\equiv 0_{n\times 1}$, we see that these quantities are in fact the transition probabilities. When $\delta\equiv 1_{n\times 1} f$ for some deterministic and continuous function $t \mapsto f(t)$, we see that
\begin{align*}
p^\delta_{ij}(t,s)
=
e^{-\int_t^s f(u) \md u}p_{ij}(t,s).
\end{align*}
The $\delta$-deflated transition probabilities satisfy systems of ordinary differential equations similar to Kolmogorov's backward and forward differential equations:
\begin{lemma}\label{lemma:ap}
The $\delta$-deflated transition probabilities satisfy the forward ordinary differential equation system
\begin{align*}
    \pderiv{}{s}p^\delta(t,s) = p^\delta(t,s) \left[\mu - \emph{diag}(\delta)\right]\!(s),
\end{align*}
and the backward ordinary differential equation system
\begin{align*}
    \pderiv{}{t}p^\delta(t,s) = - \left[\mu - \emph{diag}(\delta)\right]\!(t)p^\delta(t,s),
\end{align*}
with boundary conditions $p^\delta(t,t) = \emph{diag}(1_{n \times 1})$.
\end{lemma}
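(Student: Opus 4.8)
The plan is to view $p^\delta$ as the transition function of the jump process $Z$ subject to the (possibly signed) killing rate $\delta$, and to deduce the two differential equation systems in the classical fashion, from a Chapman--Kolmogorov-type identity together with an infinitesimal expansion. The boundary condition $p^\delta(t,t)=\mathrm{diag}(1_{n\times1})$ is immediate, since for $s=t$ the integral $\int_t^t\delta_{Z(u)}(u)\md u$ vanishes and $\E[\1{Z(t)=j}\mid Z(t)=i]=\1{i=j}$. The multiplicative relation $p^\delta(t,s)=p^\delta(t,u)\,p^\delta(u,s)$ for $t\le u\le s$ follows by splitting $\int_t^s\delta_{Z(v)}(v)\md v$ into $\int_t^u$ and $\int_u^s$, conditioning on $\sigma(Z(v):v\le u)$, using the Markov property of $Z$ to replace the inner conditional expectation by $p^\delta_{Z(u),j}(u,s)$, and summing over the value of $Z(u)$.

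The technical heart is the short-time expansion
\begin{align*}
p^\delta(s,s+h)=\mathrm{diag}(1_{n\times1})+h\,[\mu-\mathrm{diag}(\delta)](s)+\littleo(h),\qquad h\searrow0,
\end{align*}
valid locally uniformly in $s$. I would obtain it by decomposing according to the number of transitions of $Z$ in $(s,s+h]$: on the no-transition event the diagonal entry $(j,j)$ picks up the factor $\big(1+h\mu_{jj}(s)+\littleo(h)\big)\big(1-h\delta_j(s)+\littleo(h)\big)$, being the probability of no jump out of state $j$ times the deterministic weight $e^{-\int_s^{s+h}\delta_j(v)\md v}$; the single-transition event contributes $h\mu_{\ell j}(s)+\littleo(h)$ to the off-diagonal entry $(\ell,j)$ for $\ell\neq j$; and the probability of two or more transitions is $\littleo(h)$. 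Finiteness of $\mathcal{J}$ and continuity of $\mu$ and $\delta$ on compact intervals make the error term uniform.

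Combining the two ingredients yields both systems. For the forward equation, write $p^\delta(t,s+h)=p^\delta(t,s)\,p^\delta(s,s+h)$, subtract $p^\delta(t,s)$, divide by $h$, and let $h\searrow0$ to obtain the right derivative $\partial_sp^\delta(t,s)=p^\delta(t,s)[\mu-\mathrm{diag}(\delta)](s)$; the analogous computation with $s-h$ in place of $s$, together with continuity of $s\mapsto p^\delta(t,s)$, gives the matching left derivative, so the map is $C^1$ with the asserted derivative. For the backward equation, write instead $p^\delta(t,s)=p^\delta(t,t+h)\,p^\delta(t+h,s)$, so that $p^\delta(t,s)-p^\delta(t+h,s)=\big(p^\delta(t,t+h)-\mathrm{diag}(1_{n\times1})\big)\,p^\delta(t+h,s)=h\,[\mu-\mathrm{diag}(\delta)](t)\,p^\delta(t+h,s)+\littleo(h)$; dividing by $-h$ and letting $h\searrow0$ yields $\partial_tp^\delta(t,s)=-[\mu-\mathrm{diag}(\delta)](t)\,p^\delta(t,s)$, with the two one-sided derivatives again agreeing.

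I expect the main obstacle to be the rigorous control of the $\littleo(h)$ terms and the exclusion of the multiple-transition contribution in the short-time expansion, but this is standard for finite-state time-inhomogeneous jump processes with continuous intensities. An alternative route that sidesteps the infinitesimal analysis is a verification argument: let $q(\cdot,s)$ be the unique solution of the linear matrix ordinary differential equation $\partial_tq(t,s)=-[\mu-\mathrm{diag}(\delta)](t)\,q(t,s)$ with $q(s,s)=\mathrm{diag}(1_{n\times1})$; an application of It\^o's formula (equivalently, Dynkin's formula for $Z$) shows that $u\mapsto e^{-\int_t^u\delta_{Z(v)}(v)\md v}\,q(u,s)_{Z(u),\cdot}$ is a martingale on $[t,s]$, and taking conditional expectations given $Z(t)=i$ identifies $q(t,s)=p^\delta(t,s)$; the forward equation then follows by differentiating the Chapman--Kolmogorov relation with respect to the intermediate time and letting it tend to $s$.
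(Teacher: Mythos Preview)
Your argument is correct, but it follows a genuinely different route from the paper. You proceed in the classical semigroup fashion: first establish the Chapman--Kolmogorov identity $p^\delta(t,s)=p^\delta(t,u)\,p^\delta(u,s)$ for the deflated kernel, then obtain the short-time expansion $p^\delta(s,s+h)=\mathrm{diag}(1_{n\times1})+h\,[\mu-\mathrm{diag}(\delta)](s)+\littleo(h)$ by decomposing according to the number of jumps in $(s,s+h]$, and finally read off both differential systems by differentiating the semigroup relation at the left and right endpoints. The paper instead works directly with stochastic calculus: for the forward equation it applies integration by parts to the process $X(t)=I(t)\,e^{-\int_{t_0}^t\delta_{Z(u)}(u)\md u}$, using that the compensated counting processes are martingales, and then takes conditional expectations; for the backward equation it writes the martingale $Y(t)=\E[I(s)\,e^{-\int_0^s\delta_{Z(u)}(u)\md u}\mid\mathcal{F}(t)]=I(t)\,e^{-\int_0^t\delta_{Z(u)}(u)\md u}p^\delta(t,s)$, applies integration by parts again, and uses martingale representation to identify the drift as zero. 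Your alternative verification route via Dynkin's formula is in fact close in spirit to the paper's backward argument. What your approach buys is conceptual transparency and independence from stochastic-integration machinery; what the paper's approach buys is that the differentiability of $p^\delta$ and the control of the $\littleo(h)$ remainder never have to be argued separately, since everything falls out of the semimartingale decomposition.
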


\begin{proof}
The boundary conditions are evident. We first prove the forward differential equations. Define the $1 \times n$-dimensional indicator process $I$ by
\begin{align*}
     I_i(t) = 1_{\{Z(t)=i\}}.
\end{align*}
For fixed $t_0\geq 0$, define also the $1 \times n$-dimensional process $X$ by
\begin{align*}
    X(t) = I(t) e^{-\int_{t_0}^t \delta_{Z(u)}(u) \md u}.
\end{align*}
View $N$ as $n\times n$-matrices with diagonal elements
\begin{align*}
N_{jj} = - \sum_{k : k \neq j} N_{jk}
\end{align*}
In similar fashion, view $\lambda$ as $n\times n$-matrices with diagonal elements
\begin{align*}
\lambda_{jj} = - \sum_{k : k \neq j} \lambda_{jk},
\end{align*}
such that $\lambda_{jj}(t) = I_j(t-) \mu_{jj}(t)$.

Recalling that $\mathrm{d}I_i=\sum_{j \neq i} (\mathrm{d}N_{ji} - \mathrm{d}N_{ij}) = \sum_{j} \mathrm{d}N_{ji}$, we see that
\begin{align*}
\mathrm{d}I
=
1_{1 \times n}\mathrm{d}N.
\end{align*}
Because the compensated jump processes
\begin{align*}
t \mapsto N_{ij}(t) - \int_0^t \lambda_{ij}(s) \md s
\end{align*}
are martingales, we find that
\begin{align}\label{eq:apAI}
\mathrm{d}I(t)
&=
\mathrm{d}M(t)
+
1_{1 \times n}\lambda(t) \mathrm{d}t \nonumber \\
&=
\mathrm{d}M(t)
+
I(t-)\mu(t) \mathrm{d}t \nonumber \\
&=
\mathrm{d}M(t)
+
I(t)\mu(t) \mathrm{d}t,
\end{align}
where $M$ is a $1 \times n$-dimensional martingale given by
\begin{align*}
\mathrm{d}M(t)
=
1_{1 \times n}\left(\mathrm{d}N(t) - \lambda(t) \mathrm{d}t\right).
\end{align*} 
Integration by parts now yields 
\begin{align*}
\mathrm{d}X(t)
&=
\left(\mathrm{d}I(t)\right) e^{-\int_{t_0}^t \delta_{Z(u)}(u) \md u}
-
I(t) \delta_{Z(t)}(t) e^{-\int_{t_0}^t \delta_{Z(u)}(u) \md u} \mathrm{d}t \\
&=
X(t)\mu(t)\mathrm{d}t-X(t)\delta_{Z(t)}(t)\mathrm{d}t 
+e^{-\int_{t_0}^t \delta_{Z(u)}(u) \md u} \mathrm{d}M(t).
\end{align*}
By definition of $X$,
\begin{align*}
\E \left[ \left. X_j(t) \right| Z(t_0) = i\right]
&=
p_{ij}^\delta(t_0,t), \\
\E \left[ \left. \delta_{Z(t)}(t)X_j(t) \right| Z(t_0) = i\right]
&=
\delta_j(t)p_{ij}^\delta(t_0,t).
\end{align*}
The latter corresponds to the $(i,j)$'th element of the matrix product of $p^\delta(t_0,t)$ and the diagonal matrix with diagonal $\delta(t)$. Collecting all terms, it then follows from Fubini's theorem and the martingale properties of $M$ that
\begin{align*}
p^\delta(t_0,t)
=
p^\delta(t_0,t_0)
+
\int_{t_0}^t p^\delta(t_0,s) \left[\mu-\text{diag}(\delta)\right]\!(s) \md s.
\end{align*}
The forward differential equations now follow by differentiation w.r.t.\ $t$.

We now turn our attention to the backward differential equations.
%By writing
%\begin{align*}
%X_j(t)
%=
%I_j(t)e^{-\int_u^t \delta_{Z(\tau)}(\tau) \md \tau}
%\sum_k X_k(u)
%\end{align*}
%for $t_0 \leq u \leq t$, an application of the law of iterated expectations and the Markov property of $Z$ yields
%\begin{align*}
%\E \left[ \left. X_j(t) \right| Z(t_0) = i\right]
%=
%\sum_k
%\E \left[ \left. X_k(u) \right| Z(t_0) = i\right]
%\E \left[ \left. I_j(t)e^{-\int_u^t \delta_{Z(\tau)}(\tau) \md \tau} \right| Z(u) = k\right].
%\end{align*}
%We conclude that the $\delta$-deflated transition probabilities satisfy the equation
%\begin{align*}
%p^\delta(t_0,t)
%=
%p^\delta(t_0,u)
%p^\delta(u,t)
%\end{align*}
%for $0 \leq t \leq u \leq t < \infty$, confer also with the classic Chapman-Kolmogorov equations. 
For fixed $s \geq 0$ define the $1 \times n$-dimensional martingale $Y$ by
\begin{align*}
Y(t) &= \E \left[ \left. I(s) e^{-\int_0^s \delta_{Z(u)}(u)\md u} \right| \mathcal{F}(t) \right] \\
        &=  I(t) e^{-\int_0^t \delta_{Z(u)}(u) \md u} p^{\delta}(t,s),
\end{align*} 
where $0 \leq t \leq s$.

Integration by parts now yields
\begin{align*}
e^{\int_0^t \delta_{Z(u)}(u) \md u} \mathrm{d}Y(t)
=& \,
\mathrm{d}\!\left(I(t)p^{\delta}(t,s)\right)
-
I(t) \delta_{Z(t)}(t) p^{\delta}(t,s) \mathrm{d}t\\
=& \,
I(t)p^{\delta}(\mathrm{d}t,s)
+
I(t)\mu(t)p^{\delta}(t,s) \mathrm{d}t
-
I(t)\delta_{Z(t)}(t)p^{\delta}(t,s) \mathrm{d}t\\
&+
\mathrm{d}M(t) p^{\delta}(t,s),
\end{align*}
where we have used~\eqref{eq:apAI}. Because $Y$ and $M$ are $1\times n$-dimensional martingales, we find using martingale representation theory that $p^{\delta}(t,s)$ is differentiable in $t$ and that
\begin{align*}
I(t)\pderiv{}{t}p^{\delta}(t,s)
=
-
\left(
I(t)\mu(t)p^{\delta}(t,s)
-
I(t)\delta_{Z(t)}(t)p^{\delta}(t,s)
\right)\!.
\end{align*}
The backward differential equations can now be established by taking a closer look at this expression on each event $\{Z(t)=i\}$ for varying $i$. For example, on $\{Z(t)=i\}$,
\begin{align*}
\left[I(t)\delta_{Z(t)}(t)p^{\delta}(t,s)\right]_j
&=
\delta_i(t)p^\delta_{ij}(t,s),
\end{align*}
which corresponds to the $(i,j)$'th element of the matrix product between the diagonal matrix with diagonal $\delta(t)$ and the matrix $p^\delta(t,s)$. By additional observations of the same kind, the backward differential equations follow. This completes the proof.
\end{proof} 

\bibliography{../refsSmall}{}

\begin{thebibliography}{19}
\providecommand{\natexlab}[1]{#1}
\providecommand{\url}[1]{\texttt{#1}}
\expandafter\ifx\csname urlstyle\endcsname\relax
  \providecommand{\doi}[1]{doi: #1}\else
  \providecommand{\doi}{doi: \begingroup \urlstyle{rm}\Url}\fi

\bibitem[Battauz(2003)]{battauz2003}
A.~Battauz.
\newblock Quadratic hedging for asset derivatives with discrete stochastic
  dividends.
\newblock \emph{Insurance: Mathematics and Economics}, 32\penalty0
  (2):\penalty0 229--243, 2003.
\newblock \doi{10.1016/S0167-6687(02)00212-3}.

\bibitem[Bj{\"o}rk(2009)]{bjork2009}
T.~Bj{\"o}rk.
\newblock \emph{Arbitrage Theory in Continuous Time}.
\newblock Oxford Finance Series. Oxford University Press, third edition, 2009.
\newblock \doi{10.1093/0199271267.001.0001}.

\bibitem[Board(2017)]{ifrs17}
International Accounting~Standards Board.
\newblock {IFRS 17 Insurance Contracts}.
\newblock
  \url{https://www.idc.ac.il/he/specialprograms/accounting/fvf/Documents/IFRS17/WEBSITE155.pdf},
  2017.

\bibitem[Buchardt(2016)]{buchardt2016}
K.~Buchardt.
\newblock Continuous affine processes: transformations, {M}arkov chains and
  life insurance.
\newblock \emph{Advances in Applied Probability}, 48\penalty0 (2):\penalty0
  423--442, 2016.
\newblock \doi{10.1017/apr.2016.8}.

\bibitem[Buchardt and Møller(2018)]{buchardtmoeller2018}
K.~Buchardt and T.~Møller.
\newblock Hedging and cash flows in the presence of taxes and expenses in life
  and pension insurance.
\newblock \emph{Risks}, 6\penalty0 (3), 2018.
\newblock \doi{10.3390/risks6030068}.

\bibitem[Christiansen(2012)]{christiansen2012}
M.C. Christiansen.
\newblock Multistate models in health insurance.
\newblock \emph{Advances in Statistical Analysis}, 96\penalty0 (2):\penalty0
  155--186, 2012.
\newblock \doi{10.1007/s10182-012-0189-2}.

\bibitem[Duffie et~al.(2000)Duffie, Pan, and Singleton]{duffiepansingleton}
D.~Duffie, J.~Pan, and K.~Singleton.
\newblock Transform analysis and asset pricing for affine jump-diffusions.
\newblock \emph{Econometrica}, 68\penalty0 (6):\penalty0 1343--1376, 2000.
\newblock \doi{10.1111/1468-0262.00164}.

\bibitem[F{\"o}llmer and Sondermann(1986)]{fs1986}
H.~F{\"o}llmer and D.~Sondermann.
\newblock Hedging of non-redundant contingent claims.
\newblock In W.~Hildenbrand and A.~Mas-Colell, editors, \emph{Contributions to
  Mathematical Economics}, pages 205--223, North-Holland, 1986.
\newblock \doi{10.13140/RG.2.1.3298.8322}.

\bibitem[for Economic Co-operation and Development(2015)]{oecd2015}
Organisation for Economic Co-operation and Development.
\newblock Stocktaking of the tax treatment of funded private pension plans in
  {OECD} and {EU} countries {(OECD)}.
\newblock
  \url{https://www.oecd.org/tax/tax-treatment-funded-private-pension-plans-oecd-eu-countries.htm},
  2015.

\bibitem[Guasoni(2002)]{guasoni2002}
P.~Guasoni.
\newblock Risk minimization under transaction costs.
\newblock \emph{Finance and Stochastics}, 6:\penalty0 91--113, 2002.
\newblock \doi{10.1007/s780-002-8402-0}.

\bibitem[Hoem(1969)]{hoem69}
J.M. Hoem.
\newblock Markov chains in life insurance.
\newblock \emph{Bl{\"a}tter der DGVFM}, 9:\penalty0 91--107, 1969.
\newblock \doi{10.1007/BF02810082}.

\bibitem[Insurance and (EIOPA)(2009)]{eiopa2009}
European Insurance and Occupational Pensions~Authority (EIOPA).
\newblock {Solvency II Directive (Directive 2009/138/EC)}.
\newblock
  \url{https://eur-lex.europa.eu/legal-content/EN/TXT/PDF/?uri=CELEX:02009L0138-20140523&from=EN},
  2009.

\bibitem[Insurance and (EIOPA)(2015)]{eiopa2015}
European Insurance and Occupational Pensions~Authority (EIOPA).
\newblock {Commission Delegated Regulation (EU) 2015/35 of 10 October 2014
  supplementing Directive 2009/138/EC of the European Parliament and of the
  Council on the taking-up and pursuit of the business of Insurance and
  Reinsurance (Solvency II)}.
\newblock
  \url{https://eur-lex.europa.eu/legal-content/EN/TXT/PDF/?uri=OJ:L:2015:012:FULL&from=EN},
  2015.

\bibitem[Lamberton et~al.(1998)Lamberton, Pham, and Schweizer]{lambertonetal}
D.~Lamberton, H.~Pham, and M.~Schweizer.
\newblock Local risk-minimization under transaction costs.
\newblock \emph{Mathematics of Operations Research}, 23\penalty0 (3):\penalty0
  585--612, 1998.
\newblock \doi{10.1287/moor.23.3.585}.

\bibitem[Møller(2001)]{thmoller01}
T.~Møller.
\newblock Risk-minimizing hedging strategies for insurance payment processes.
\newblock \emph{Finance and Stochastics}, 5\penalty0 (4):\penalty0 419--446,
  2001.
\newblock \doi{10.1007/s007800100041}.

\bibitem[Norberg(1991)]{norberg1991}
R.~Norberg.
\newblock Reserves in life and pension insurance.
\newblock \emph{Scandinavian Actuarial Journal}, 1991:\penalty0 3--24, 1991.
\newblock \doi{10.1080/03461238.1991.10557357}.

\bibitem[Schweizer(1994)]{schweizer1994}
M.~Schweizer.
\newblock Risk-minimizing hedging strategies under restricted information.
\newblock \emph{Mathematical Finance}, 4\penalty0 (4):\penalty0 327--342, 1994.
\newblock \doi{10.1111/j.1467-9965.1994.tb00062.x}.

\bibitem[Schweizer(2001)]{schweizerguided}
M.~Schweizer.
\newblock A guided tour through quadratic hedging approaches.
\newblock In E.~Jouini, J.~Cvitani{\'c}, and M.~Musiela, editors, \emph{Option
  Pricing, Interest Rates and Risk Management}, pages 538--574, Cambridge
  University Press, 2001.
\newblock \doi{10.1017/CBO9780511569708.016}.

\bibitem[Schweizer(2008)]{schweizer2008}
M.~Schweizer.
\newblock Local risk-minimization for multidimensional assets and payment
  streams.
\newblock \emph{Banach Center Publications}, 83:\penalty0 213--229, 2008.
\newblock \doi{10.4064/bc83-0-13}.

\end{thebibliography}
\bibliographystyle{plainnat}

\end{document}